\newtheorem{theorem}{Theorem}[section]
\newtheorem{lemma}[theorem]{Lemma}
\newtheorem{corollary}[theorem]{Corollary}
\newtheorem{observation}{Observation}
\newenvironment{proof}{\noindent {\em {Proof:}}}{$\blacksquare$\vskip \belowdisplayskip}
\newcommand{\pip}{\left(\frac{\pi(1-p)}{(1-\pi)p}\right)}
\newcommand{\qr}{\left(\frac{q}{1-q}\right)}
\begin{document} 

%\markboth{Bhawalkar et al.}{Value of Targeting}

\title{Value of Targeting \thanks{\{kshipra, phummel, sergeiv\}@google.com}
\thanks{A shorter version of this paper appeared in SAGT 2014}}
\author{Kshipra Bhawalkar \\ Google Inc.  
\and Patrick Hummel \\ Google Inc. 
\and Sergei Vassilvitskii\\  Google Inc.}

\maketitle
\begin{abstract}
We undertake a formal study of the value of targeting data to an advertiser.  As expected, this value is increasing in the utility difference between realizations of the targeting data and the accuracy of the data, and depends on the distribution of competing bids.  However, this value may vary non-monotonically with an advertiser's budget. Similarly, modeling the values as either private or correlated, or allowing other advertisers to also make use of the data, leads to unpredictable changes in the value of data.  We address questions related to multiple data sources, show that utility of additional data may be {\em non}-monotonic, and provide tradeoffs between the quality and the price of data sources.  In a game-theoretic setting, we show that advertisers may be worse off than if the data had not been available at all. We also ask whether a publisher can infer the value an advertiser would place on targeting data from the advertiser's bidding behavior and illustrate that this is impossible.

\end{abstract}

\section{Introduction}

Good targeting is paramount to successful advertising: showing the right ad to the right person is beneficial to all parties involved.  On the other hand, poor targeting is wasteful: knowing about the latest car is of limited use to someone who only commutes by bike; ads for umbrellas are spam to someone who lives in the desert.  In the past few years several companies have begun assisting advertisers in their targeting efforts.  Firms like BlueKai and eXelate build profiles of web users and classify them into different interest categories, such as those interested in buying a new car, traveling to Barcelona, or with an obsession over the latest gadget.  Data management is a multi billion dollar business, but there has been little analysis characterizing the value data to an advertiser. 

An advertiser usually knows which segments of the population she wants to pitch her products to. However, these may not be perfectly aligned with the classification available from the firms above. For example, a hotel owner in Milan is not explicitly interested in Americans going to Barcelona, but does know that they may respond better than the average individual to her pitch since they have already expressed willingness to travel internationally.  How much would the hotel owner be willing to pay for the ``travel to Barcelona" demographic?  On the other hand, suppose she instead appeals to people interested in fashion.  Although not all of them have expressed interest to travel, they may be tempted by a fashion week promotion that she is running.  How much should she pay for this segment?  What if she buys the two segments together? 

At first glance, the answer is relatively simple: by using the additional targeting the advertiser increases the value of the ``typical" user seeing her pitch, and thus the value for the targeting data is bounded above by the difference in the two values.  In a similar fashion, the quality of targeting, which accounts for the alignment between the segmentation provided by the data provider and that desired by the buyer also comes into play, and higher quality (better aligned) targets should garner higher prices.  However, for online advertisements, the impressions will be sold at auction, making the analysis more subtle.  We show that the actions of other bidders as well as the broader competitive landscape play a big role in determining the value of the data.  We outline our results below:

{\em Model and Basics} (Sections \ref{sec:model} and \ref{sec:databasics}). We formalize the problem and present basic findings. As expected, we prove that the value for the data is higher for data segments that occur more frequently, are closely aligned with the targeting criteria, and have a larger impact on whether the advertiser wins the auction. 

{\em Budgets} (Section \ref{sec:budgets}). We show that the addition of a budget constraint changes the nature of the optimization problem, as bidders no longer bid their true value. This leads to potentially counterintuitive behavior, where an increase in a budget results in a lower valuation for the data in an optimal solution. 

{\em Private and Correlated Values} (Section \ref{sec:correlated}). The presence of correlated values also changes the underlying optimization problem. We show that the value of data crucially depends on the competition, and one may do better trying to entice others to buy the data rather than spending the money yourself.  

{\em Game Theoretic Setting} (Section \ref{sec:multibuyers}). We characterize the nature of equilibria in a game where multiple buyers simultaneously decide whether to purchase targeting data. We show that a pure equilibrium need not exist, and advertisers can be worse off, compared to a situation where no targeting data is present.  

{\em Multiple Data Sources} (Section \ref{sec:manysignals}). We study the problem of selecting from multiple independent data sources.  We show that the value of the data is non-monotone in the number of the sources, even if they are all identical, and give a prior-independent bound on the value of data as a function of its quality. 

{\em Value to Publishers} (Section \ref{sec:reverse}). We prove that simply observing an advertiser's bids on a large number of heterogeneous impressions does not provide enough information for a publisher to infer the advertiser's value for the data. 

\subsection{Related Work}

A large body of existing work studies the effect of targeting in auctions, but most of this work is from the perspective of the publisher.  \cite{Milgrom82} addresses the question of whether a seller should enable buyers to improve targeting in a correlated values auction by revealing information about the quality of the seller's goods.  More recently, \cite{Fu12} and \cite{Hummel13} consider the question of when improved targeting  increases a seller's revenue in an auction where the buyers have independent private values.  These papers all suggest that it is often beneficial for a seller to enable buyers to more finely target in an auction.  Finally, \cite{Abraham13} analyzes how revenue is affected by the asymmetries in information possessed by different participants and finds that such asymmetries can sometimes lead to adverse revenue effects.

A related thread of analysis considers revenue optimization strategies for an auctioneer with targeting data.  \cite{Emek12} explores how an auctioneer selling a probabilistic good might reveal partial information to maximize its revenue.  \cite{Ghosh07} considers a similar problem in a discrete setting with many goods.  And \cite{Babaioff12} studies the mechanism design problem where a seller might use the asymmetry of information to maximize his revenue. All of these results describe the effects of additional data on the seller's revenue and the overall welfare. In this work we tackle the converse problem of the value of data to an individual buyer. 

There have also been papers on how targeting affects market equilibria when there are multiple publishers.  \cite{Ghosh12} considers the question of how cookie-matching affects the market equilibrium in a model in which advertisers can use a cookie from one publisher to better target users on other publishers.  And \cite{Bergemann11} investigates the question of how enabling advertisers to target certain segments of the population would affect the market equilibrium for advertising in a model of informative advertising.  This work differs from our work in that it does not consider the underlying auctions that are used to sell advertising opportunities.

Another line of work is on advertiser optimization.  Here the focus is either on getting a fair or representative allocation~\cite{RepBid}, or on finding bidding strategies that work well with poor forecasts~\cite{AdaBid2,Chen,AdaBid}. These approaches are data agnostic and do not explore advertiser actions when she can use better targeting.

Our approach is quite different from any of the above papers, as we focus on what factors affect the advertiser's utility for targeting data in an auction setting.  None of the papers considered above address this question.

\section{Preliminaries}
\label{sec:model}
We consider the setting of a single agent (advertiser) buying items (impressions) in a second price auction.  The items are heterogeneous, and different items are valued differently by the buyer.  Let $T = \{t_1, t_2, \ldots\}$ be the partition of items into types.  For an item of type $t \in T$ (\emph{e.g.}, impressions from Texas), we say that the buyer draws his value from a distribution with cdf $G_t$ and corresponding pdf $g_t$.  In the simplest case when $G_t$ is a point distribution, we denote it by $v_t$.  We denote by $\pi_t$ the {\em prior} that the buyer has on the item being of type $t$.  This is the probability that a random item is of type $t$; thus $\sum_{t} \pi_t = 1$.

In addition to knowing her prior, the buyer has access to additional data sources (signals) about the impression type.  We assume that each signal is drawn from a distribution that depends only on the impression type and that conditional on the impression type, the signals are independent from the buyer's prior and from each other.  The data sources may be imperfect. 

For part of the manuscript we will focus on the special case in which a data source only identifies whether a user is in one of two subsets of the population, which we denote by $H$ and $L$ (high and low). The advertiser then has a prior $\pi$ for the high type, and $1 - \pi$ for the low type, and values them at $v_H$ and $v_L$ respectively. In this setting we also assume that a signal has the same benefit  for predicting both types.  Formally, we model this by saying that each signal $s_i$ has a {\em quality} $q_i \in [\frac{1}{2}, 1]$, which represents the probability of the signal being correct, $q_i = Pr[s_i = t]$; the signal is then incorrect with probability $1 - q_i$.  %In addition, each signal $s_i$ has a querying cost, which we denote by $c_i$. 

The items are sold in a second price auction, and we let $f$ ($F$) denote the pdf (cdf) of the distribution of the highest competing bid.  In the independent private value (IPV) setting, the distributions are the same for all of the different types.  A natural generalization is the correlated value setting, where items of different types sell for different prices, whose densities and cumulative distribution functions we denote by $f_t$ and $F_t$ respectively.  For the rest of the paper, we will assume the IPV setting unless explicitly specified otherwise. 

We work with the standard quasilinear utility model, with buyers acting to maximize the difference between their value and the price paid. When the type is known to the buyer, she maximizes her utility by bidding her value for each impression type.  If, on the other hand, she only has the prior information about the item type, she bids $\bar{v} = \sum_t \pi_t v_t$ in the IPV setting.
\section{Data Basics}
\label{sec:databasics}
We begin by considering simple settings to develop some intuition about the value of targeting data.  We first illustrate that the buyer's value for the data results both from buying more desired items and not overpaying for lower quality items.  We extend this analysis to show that the value inherently depends on four quantities: the buyer's prior information, the quality of the signal, the difference in values for the different types, {\em and} the competitive landscape, expressed as the additional fraction of impressions the buyer can win with the value. 

\subsection{Binary User Types}
\label{sec:binary}
To develop intuition we first consider the setting in Section 2 with two types of users, $H$ and $L$.  The advertiser has access to a noisy data source that will either assume the value $h$ or $\ell$.  In particular, if the user is of type $H$ $(resp., L)$, then the data source produces a signal $h$ $(resp., \ell)$ with probability $q > \nicefrac{1}{2}$ and $\ell$ $(resp., h)$ with probability $1-q$.

In this case, if the advertiser has access to the data, she updates her prior based on the signal.  The probabilities that a user is of type $H$ upon receiving signals of $h$ and $\ell$ respectively are:
$$\pi|h = \frac{\pi q}{\pi q + (1 - \pi)(1 - q)}  \indent \indent \indent \pi|\ell = \frac{\pi(1-q)}{\pi (1-q) + (1 - \pi)q}$$
Thus the expected value of an advertising opportunity upon receiving a signal of $h$ or $\ell$ respectively is $v|h = \pi|h \cdot v_H + (1 - \pi|h)v_L$ and $v|\ell = \pi | \ell \cdot v_H + (1 - \pi | \ell) v_L$.  We now precisely characterize the value of the signal to the advertiser:

\begin{lemma}\label{l:noisy} The advertiser's value for this noisy signal is
\begin{equation*}
\int_{\bar{v}}^{v | h} \!\!\!\!(\pi q (v_H - p) + (1 - \pi)(1 - q)(v_L - p))f(p) dp  
- \int_{v | \ell}^{\bar{v}} (\pi(1 - q)(v_H - p) + (1 - \pi)q (v_L - p)) f(p) dp.
\end{equation*} 
\end{lemma}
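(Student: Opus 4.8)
The plan is to compute the advertiser's expected utility in two scenarios---without the signal and with the signal---and then subtract, since the value of the signal is by definition this difference. Throughout I rely on the fact, noted in Section~\ref{sec:model}, that in a second-price auction it is a (weakly) dominant strategy to bid one's expected value for the item, independent of the competing-bid distribution. Without the signal the buyer knows only her prior, so she bids $\bar v = \pi v_H + (1-\pi)v_L$ and wins precisely when the highest competing bid satisfies $p < \bar v$. With the signal she updates to the posterior value $v|h$ (after $h$) or $v|\ell$ (after $\ell$) and bids that, winning when $p < v|h$ or $p < v|\ell$ respectively. Because $q > \nicefrac12$ one checks that $v|\ell < \bar v < v|h$, an ordering I will use when combining the integrals at the end.

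First I would write the no-signal utility. Conditioning on winning ($p < \bar v$) and taking the expectation of value minus price over the prior gives
\begin{equation*}
U_0 = \int_0^{\bar v} \bigl(\pi(v_H - p) + (1-\pi)(v_L - p)\bigr) f(p)\, dp.
\end{equation*}
The key bookkeeping device is to split the prior over the four joint (type, signal) outcomes, whose probabilities are $\pi q$, $\pi(1-q)$, $(1-\pi)(1-q)$, and $(1-\pi)q$ for $(H,h)$, $(H,\ell)$, $(L,h)$, and $(L,\ell)$. Using $\pi = \pi q + \pi(1-q)$ and $1-\pi = (1-\pi)(1-q) + (1-\pi)q$, I would rewrite $U_0$ as the sum of an ``$h$-part'' with integrand $A(p) = \pi q(v_H - p) + (1-\pi)(1-q)(v_L - p)$ and an ``$\ell$-part'' with integrand $B(p) = \pi(1-q)(v_H - p) + (1-\pi)q(v_L - p)$, both still integrated from $0$ to $\bar v$.

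Next I would compute the with-signal utility by summing, over the same four joint outcomes, the probability of each outcome times the winning region determined by the bid for the corresponding signal. The two outcomes producing signal $h$ contribute $\int_0^{v|h} A(p) f(p)\, dp$ and the two producing signal $\ell$ contribute $\int_0^{v|\ell} B(p) f(p)\, dp$; here it is essential that the joint probabilities are exactly the coefficients appearing in $A$ and $B$, which is why the decomposition of $U_0$ was chosen to match them. Subtracting $U_0$ term by term, the $A$-integrals collapse to $\int_{\bar v}^{v|h} A(p) f(p)\, dp$ and the $B$-integrals collapse to $-\int_{v|\ell}^{\bar v} B(p) f(p)\, dp$, using $v|\ell < \bar v < v|h$ to fix the orientation and the sign. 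Expanding $A$ and $B$ back out reproduces exactly the claimed expression.

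The computations are routine; the only real subtlety---and the step I would be most careful with---is aligning the Bayesian posterior bids with the joint probability weights. Concretely one must verify the identity $\Pr[h]\cdot (v|h) = \pi q\, v_H + (1-\pi)(1-q)\, v_L$ (and its analogue for $\ell$), which follows directly from the posterior formula for $\pi|h$ stated just before the lemma. This identity is what guarantees that weighting each signal's utility by $\Pr[h]$ (resp.\ $\Pr[\ell]$) and bidding the posterior mean is equivalent to weighting the per-type integrals $(v_H - p)$ and $(v_L - p)$ by the joint probabilities rather than the conditional ones. Once that is in hand, the grouping into the $A$- and $B$-integrands and the cancellation against $U_0$ proceed mechanically.
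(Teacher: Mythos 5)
Your proposal is correct and follows essentially the same route as the paper: both compute $u_{ND}$ (bidding $\bar v$) and $u_D$ (bidding the posterior values) by decomposing over the four joint (type, signal) outcomes with weights $\pi q$, $\pi(1-q)$, $(1-\pi)(1-q)$, $(1-\pi)q$, and then subtract. The only cosmetic difference is bookkeeping: you group the terms by signal realization (your $A$- and $B$-integrands) and cancel directly, whereas the paper groups by user type ($H$ versus $L$ impressions), computes the gain for each type, and then adds the two gains.
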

\begin{proof} 
If the advertiser does not have access to the data source, then the advertiser obtains a utility of  
\begin{equation*}
u_{ND} = \pi\int_0^{\bar{v}} (v_H - p)f(p) dp + (1-\pi)\int_0^{\bar{v}} (v_L - p) f(p) dp.
\end{equation*}

When the advertiser has access to the data, the advertiser's utility is:
\begin{align*}
u_{D} &= \pi q \int_0^{v | h} (v_H - p) f(p) dp + \pi (1 - q) \int_0^{v| \ell} (v_H  - p) f(p) dp 
\\& +  (1 - \pi)q \int_{0}^{v | \ell}  (v_L  - p) f(p) dp  + (1 - \pi)(1 - q)  \int_{0}^{v | h} (v_L - p) f(p) dp \\
&= \pi\int_{0}^{v | \ell} (v_H - p)  f(p) dp + \pi q \int_{v | \ell}^{v | h} (v_H - p) f(p) dp +(1 - \pi)\int_{0}^{v | \ell} (v_L - p) f(p) dp  
\\& + (1 - \pi)(1 - q) \int_{v | \ell}^{v | h} (v_L - p) f(p) dp.
\end{align*}

\noindent This means that the advertiser's overall utility gain from the data for $H$ impressions is %$u_{D} - u_{ND} = $
\begin{align*}
u_D - u_{ND} &= \pi \left[  \int_{0}^{v | \ell} (v_H - p)  f(p) dp +  q \int_{v | \ell}^{v | h} (v_H - p) f(p) dp - \int_{0}^{\bar{v}} (v_H - p) f(p) dp \right]\\
%&= \pi \left[- \int_{v | \ell}^{\bar{v}} (v_H - p) f(p) dp  + q \int_{v | \ell}^{\bar{v}} (v_H - p) f(p) dp + q \int_{\bar{v}}^{v | h} (v_H - p) f(p) dp \right] \\
&= \pi \left[ q \int_{\bar{v}}^{v|h} (v_H - p) f(p) dp - (1 - q) \int_{v | \ell}^{\bar{v}} (v_H - p) f(p) dp \right].
\end{align*}

%\noindent and the advertiser's overall utility gain from the data for $L$ impressions is:
A similar analysis for the $L$ impressions gives: $u_{D} - u_{ND} = $
\begin{align*}
%u_{D} - u_{ND} &=  (1 - \pi) \left[  \int_{0}^{v | \ell} (v_L - p)  f(p) dp +  (1 - q) \int_{v | \ell}^{v | h} (v_L - p) f(p) dp - \int_{0}^{\bar{v}} (v_L - p) f(p) dp\right]\\
%&= (1 - \pi) \left[ -\int_{v | \ell}^{\bar{v}} (v_L - p) f(p) dp + \int_{v | \ell}^{\bar{v}} (1 - q) (v_L - p)f(p) dp + \int_{\bar{v}}^{v | h} (1 - q) (v_L - p) f(p) dp 
%\right]\\
&(1 - \pi)\left[ (1 - q) \int_{\bar{v}}^{v | h} (v_L - p) f(p) dp - q \int_{v | \ell}^{\bar{v}} (v_L - p) f(p) dp \right].
\end{align*}

The result follows by adding the advertiser's utility gain from the two types of impressions.
%By combining the two expressions, we see that the advertiser's total utility gain from having access to the noisy signal is :
%\begin{align*}
%u_{D} - u_{ND} = \int_{\bar{v}}^{v | h} (\pi q (v_H - p) + (1 - \pi)(1 - q)(v_L - p))f(p) dp  -  \int_{v | \ell}^{\bar{v}} (\pi(1 - q)(v_H - p) + (1 - \pi)q (v_L - p)) f(p) dp
%\end{align*}
%
\end{proof}

While Lemma \ref{l:noisy} gives an exact expression for the advertiser's value for a noisy signal, it may be not be immediately transparent how the advertiser's value for this data is affected by the different parameters in the model.  Our next result gives a simpler expression which bounds this value and shows that value can be decomposed into four independent factors. 

\begin{theorem}\label{t:noisy2}  The advertiser's value for the data can be bounded from above by $(v_H - v_L) \pi (1 - \pi) (2q - 1) (F(v | h) - F(v | \ell))$.
\end{theorem}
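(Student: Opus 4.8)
The plan is to start from the exact value in Lemma~\ref{l:noisy}, write it as $A - B$, and bound each integral by replacing the integrand with its extreme value over the relevant interval. Here $A = \int_{\bar v}^{v|h}\phi_h(p)f(p)\,dp$ with $\phi_h(p) = \pi q(v_H-p) + (1-\pi)(1-q)(v_L-p)$, and $B = \int_{v|\ell}^{\bar v}\phi_\ell(p)f(p)\,dp$ with $\phi_\ell(p) = \pi(1-q)(v_H-p) + (1-\pi)q(v_L-p)$. The key structural observation is that both $\phi_h$ and $\phi_\ell$ are affine and strictly decreasing in $p$ (their slopes, $-(\pi q + (1-\pi)(1-q))$ and $-(\pi(1-q)+(1-\pi)q)$, are negative), so each is controlled by its values at the two endpoints of its interval; note also that $q > \nicefrac{1}{2}$ guarantees $v|\ell < \bar v < v|h$, so the intervals are correctly oriented.

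First I would evaluate the integrands at the endpoints. Using $v_H - \bar v = (1-\pi)(v_H - v_L)$ and $v_L - \bar v = -\pi(v_H-v_L)$, a direct computation gives $\phi_h(\bar v) = \pi(1-\pi)(v_H-v_L)(2q-1)$ and $\phi_\ell(\bar v) = -\pi(1-\pi)(v_H-v_L)(2q-1)$. Substituting the definitions of $\pi|h$ and $\pi|\ell$ (so that $v_H - v|h = (1-\pi|h)(v_H-v_L)$ and $v_L - v|h = -\pi|h(v_H - v_L)$, and likewise at $v|\ell$) shows that $\phi_h(v|h) = 0$ and $\phi_\ell(v|\ell) = 0$; these vanishings are what one expects, since at the posterior value the marginal impression is bid exactly at cost. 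Consequently, over $[\bar v, v|h]$ the function $\phi_h$ decreases from $\pi(1-\pi)(v_H-v_L)(2q-1)$ down to $0$, so $0 \le \phi_h(p) \le \pi(1-\pi)(v_H-v_L)(2q-1)$ there, and since $f \ge 0$ this yields $A \le \pi(1-\pi)(v_H-v_L)(2q-1)\,(F(v|h) - F(\bar v))$. Over $[v|\ell, \bar v]$ the function $\phi_\ell$ decreases from $0$ down to $-\pi(1-\pi)(v_H-v_L)(2q-1)$, so $\phi_\ell(p) \ge -\pi(1-\pi)(v_H-v_L)(2q-1)$, giving $-B \le \pi(1-\pi)(v_H-v_L)(2q-1)\,(F(\bar v) - F(v|\ell))$.

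Finally I would add the two bounds; the $F(\bar v)$ contributions telescope, leaving the value at most $\pi(1-\pi)(v_H-v_L)(2q-1)\,(F(v|h) - F(v|\ell))$, as claimed. The only genuine computation is the endpoint algebra, which is routine; the step most likely to trip one up is the sign bookkeeping, since $B$ enters with a minus sign, so an upper bound on the total value requires a \emph{lower} bound on $\phi_\ell$ (its most negative endpoint value) rather than an upper bound. Beyond that, the argument is just monotonicity of an affine integrand together with nonnegativity of $f$.
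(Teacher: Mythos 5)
Your proposal is correct and follows essentially the same route as the paper: both arguments observe that each integrand is affine and decreasing in $p$, vanishes at the posterior value ($v|h$ or $v|\ell$), attains its extreme value $\pm\pi(1-\pi)(v_H-v_L)(2q-1)$ at $\bar v$, and then bound each integral by that extreme times the corresponding increment of $F$, with the $F(\bar v)$ terms telescoping. The only difference is presentational — the paper works out the first integral and says the second is ``similar,'' whereas you carry out the sign bookkeeping for the second term explicitly, which is exactly the detail the paper leaves implicit.
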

\begin{proof}
Consider just the first term in our expression for $u_D - u_{ND}$, and rewrite $v_H = v_L + (v_H -v_L)$.  Then we have
\begin{align*}
& \int_{\bar{v}}^{v | h} (\pi q (v_H - p) + (1 - \pi)(1 - q)(v_L - p))f(p) dp  \\
 = &\int_{\bar{v}}^{v|h} \Big(\pi q (v_H - v_L) + (v_L - p)(\pi q + (1 - \pi)(1 - q) )\Big)f(p) dp
\end{align*}

The term $\pi q (v_H - v_L) + (v_L - p)(\pi q + (1 - \pi)(1 - q))$ is a decreasing function of $p$, and it is easy to check that it equals $0$ at $p = v | h$. It achieves its maximum in the interval $[\bar{v}, v | h]$ at $p = \bar{v}$ in which case it is equal to:
\begin{align*}
\pi q (v_H - v_L) &+ (v_L - \bar{v})(\pi q + (1 - \pi)(1 - q) ) 
%&= \pi q (v_H - v_L) - \pi (v_H - v_L)( \pi q + (1 - \pi)(1 - q)) \\
%&= (v_H - v_L) \pi (q - \pi q - (1 - \pi)(1 - q)) \\
%&= (v_H - v_L) \pi (q (1 - \pi) - (1 - \pi)(1 - q)) \\
= (v_H - v_L) \pi (1 - \pi)(2q - 1)
\end{align*}

Therefore the maximum value of the integral is: $$(v_H - v_L) \pi (1 - \pi)(2q - 1)\Big(F(v|h) - F(\bar{v})\Big).$$
The analysis for the other integral is similar,  therefore the total value of the data is bounded by:
$$(v_H - v_L) \pi (1 - \pi) (2q - 1) \Big(F(v | h) - F(v | \ell)\Big). \qquad $$
\end{proof}

Theorem \ref{t:noisy2} provides some basic rules of thumb for valuing the data.  If there is not much difference between the advertiser's values for advertising to different types of users, $(v_H - v_L)$, then the advertiser will not care much whom she advertises to, and will have little value for the data.  Thus the advertiser's value for the data is increasing in this quantity.  It is also intuitive that the advertiser's value for the data is increasing in the accuracy of the signal, $(2q - 1)$.

Additionally, if  competing buyers rarely place a bid that falls between the values the advertiser may have for the different types of users, an advertiser's ability to adjust her bid in response to the different possible realizations of the targeting data will rarely have an effect on whether she wins the advertising opportunity.  Thus the value of the data is increasing in the likelihood of a competing advertiser placing a bid between the advertiser's possible values for the different types of advertising opportunities, $(F(v|h) - F(v|\ell))$. 

Finally, it makes sense that the advertiser's value for the data is single-peaked in $\pi$.  If $\pi$ is very close $0$ or $1$, then the data almost always takes on the same value, and there is little gain to seeing it.  By contrast, when $\pi$ is closer to $\nicefrac{1}{2}$, there is more uncertainty in the true type of the item, and thus more heterogeneity in the different realizations of the targeting data, so the data is more valuable.

\subsection{General Distributions of Valuations}
\label{sec:general}
We now move to a more general model, where we consider what happens when an advertiser's best estimate for his value for advertising to a particular user may assume a large number of distinct values. We show formally that more ``refined'' signals on the value of the item are more valuable to the advertiser.  

To model this setting, suppose that when an advertiser has access to data, she learns that the best estimate for her value for advertising to a particular user is $v$, which is a random draw from some distribution $G(\cdot)$ with corresponding pdf $g(\cdot)$.  If an advertiser has no access to data, then the advertiser simply knows her expected value,  $E[v | v \thicksim G] = \bar{v}$. The utility gain from the data is then $\int_0^{\infty} \int_{\bar{v}}^v (v-p) f(p) dp \; g(v) dv$. 

Now suppose there are two different data sources that an advertiser might use, with distributions $G$ and $H$. If they are unbiased, the distributions will satisfy $E_G[v] = E_H[v]$. We address the question of when one data source would be more useful to an advertiser than another:

\begin{theorem}\label{t:sosd}  Consider two data sources with corresponding cdfs $G$ and $H$ satisfying $E_G[v] = E_H[v]$. Then if $H(\cdot)$ second order stochastically dominates $G(\cdot)$, the advertiser has more value for the data source $G$ than the data source $H$. 

\end{theorem}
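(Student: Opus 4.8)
The plan is to reduce the comparison of the two data sources to a comparison of the expected value of a single, source-independent function of the realized estimate $v$. Writing $\phi(v) = \int_{\bar{v}}^{v} (v-p) f(p)\, dp$, the utility gain from a source with cdf $G$ is exactly $E_G[\phi(v)] = \int_0^\infty \phi(v)\, g(v)\, dv$, and likewise the gain from $H$ is $E_H[\phi(v)]$. Thus the theorem asks us to show $E_G[\phi] \ge E_H[\phi]$ whenever $E_G[v]=E_H[v]$ and $H$ second order stochastically dominates $G$, i.e. whenever $G$ is a mean-preserving spread of $H$.

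The crux is that $\phi$ is convex. Differentiating under the integral sign, the boundary contribution $(v-p)f(p)$ at $p=v$ vanishes, leaving $\phi'(v) = \int_{\bar{v}}^v f(p)\, dp = F(v) - F(\bar{v})$, and hence $\phi''(v) = f(v) \ge 0$. Intuitively, the advertiser can only win (and only pays) when the realized estimate exceeds the price, so the gain from a high realization grows faster than the loss from a symmetric low realization; this option-like asymmetry is precisely what makes more ``spread out'' signals more valuable.

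With convexity in hand I would finish in one of two equivalent ways. The quickest is to invoke the standard characterization of second order stochastic dominance: among distributions with a common mean, $G$ is a mean-preserving spread of $H$ if and only if $E_G[\psi] \ge E_H[\psi]$ for every convex $\psi$; applying this with $\psi=\phi$ gives the result at once. For a self-contained argument I would instead integrate $E_G[\phi]-E_H[\phi] = \int_0^\infty \phi(v)\,(g(v)-h(v))\, dv$ by parts twice. The first integration introduces $\phi'(v)=F(v)-F(\bar{v})$ against the factor $G(v)-H(v)$; the second introduces $\phi''(v)=f(v)$ against the factor $\Psi(v) = \int_0^v (G(t)-H(t))\, dt$. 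Second order stochastic dominance of $H$ over $G$ is exactly the statement $\Psi(v)\ge 0$ for all $v$, while equality of means gives $\Psi(\infty)=0$. The surviving term is therefore $E_G[\phi]-E_H[\phi] = \int_0^\infty f(v)\, \Psi(v)\, dv \ge 0$.

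The main obstacle is purely a matter of regularity rather than of ideas: each integration by parts produces boundary terms, and I must check that $\phi(v)\,(G(v)-H(v))$ and $\phi'(v)\,\Psi(v)$ vanish at $v=0$ and as $v\to\infty$. At $v=0$ this holds because both cdfs start at the bottom of the common support, so $\Psi(0)=0$; at infinity it follows from $G(v)-H(v)\to 0$ together with $\Psi(\infty)=0$, provided the value distributions have finite mean so that the tails of $\phi$ do not grow too fast. Granting these mild tail conditions, the computation is routine and the sign of $f(v)\,\Psi(v)$ delivers the conclusion that the advertiser values $G$ at least as much as $H$.
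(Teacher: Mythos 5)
Your proposal is correct and follows essentially the same route as the paper: both reduce the comparison to $E_G[\phi]$ versus $E_H[\phi]$ for the gain function $\phi(v)=\int_{\bar{v}}^{v}(v-p)f(p)\,dp$, verify $\phi''(v)=f(v)\ge 0$, and then invoke the equivalence between second order stochastic dominance with equal means and expectations of convex functions. The only cosmetic difference is in the last step, where the paper realizes $G$ as a mean-preserving spread of $H$ and applies Jensen's inequality, while you cite the standard SOSD characterization directly (or, equivalently, carry out the double integration by parts yourself); these are interchangeable ways of closing the same argument.
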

\begin{proof}
For a given data source $S \in \{G, H\}$, the advertiser's utility gains can be written as: 
%
%
%Note that the advertiser's utility gains from using the data sources $G$ and $H$ respectively are as follows:
%
%\begin{equation*}
%u_G = \int_0^{\infty} \int_{\bar{v}}^v (v-p) f(p) dp \; dG(v)
%\end{equation*}
 %
\begin{equation*}
u_S = \int_0^{\infty} \int_{\bar{v}}^v (v-p) f(p) dp \; dS(v).
\end{equation*}
Now note that $u(v) \equiv \int_{\bar{v}}^v (v-p) f(p) dp$ is a convex function of $v$ because 
$$\frac{d^2}{dv^2} \int_{\bar{v}}^v (v-p) f(p) dp = \frac{d}{dv} \int_{\bar{v}}^v f(p) dp = \frac{d}{dv} [F(v) - F(\bar{v})] = f(v) \geq 0.$$  
We can thus rewrite $u_S$ as $u_S = \int_0^{\infty} u(v) \; dS(v)$ for some convex function $u(v)$.

Now consider what happens if $H(\cdot)$ second order stochastically dominates $G(\cdot)$.  Since $G(\cdot)$ is a mean-preserving spread of $H(\cdot)$, it follows that
\begin{equation*}
u_G = \int_0^{\infty} u(v) \; dG(v) = \int_0^{\infty} E[u(w + \epsilon_w) | w] \; dH(w)
\end{equation*}
\noindent for some distributions $\epsilon_w$ (that may depend on $w$) satisfying $E[\epsilon_w | w] = 0$ for all $w$.  Thus
\begin{align*}
u_G  &= \int_0^{\infty} E[u(w + \epsilon_w) | w] \; dH(w) \\&\geq \int_0^{\infty} u(E[w + \epsilon_w | w]) \; dH(w) = \int_0^{\infty} u(w) \; dH(w) = u_H,
\end{align*}
\noindent where the inequality follows from Jensen's inequality.
\end{proof}

\subsection{Budgets}
\label{sec:budgets}
Throughout the analysis so far we have assumed that an advertiser does not face any budget constraints, but in some settings an advertiser only has a fixed budget for advertising.  It is natural to ask how this possibility would affect an advertiser's value for targeting data.  We address this possibility in this section.

We again consider the binary case, with types $H$ and $L$ and corresponding price distributions $F_H(\cdot)$ and $F_L(\cdot)$.  Let $f(\cdot) \equiv \pi f_H(\cdot) + (1-\pi) f_L(\cdot)$ be the density of the highest competing bid unconditional on type.  In addition, let $B$ denote the maximum amount the advertiser can spend per advertising opportunity.  Note that if the advertiser does not have access to the data, then in every auction she makes a bid $b$ satisfying $\int_{0}^{b} p f(p) dp = B$.  

If the advertiser has access to a perfectly informative data source, then the advertiser bids $b_L (b_H)$ when the user is type $L(H)$, where the values of $b_L$ and $b_H$ must satisfy $\pi \int_{0}^{b_H} p f_H(p) dp + (1-\pi) \int_{0}^{b_L} p f_L(p) dp = B$.  The advertiser then chooses the values of $b_L$ and $b_H$ to maximize $\pi \int_{0}^{b_H} v_H f_H(p) dp + (1-\pi) \int_{0}^{b_L} v_L f_L(p) dp - B$ subject to this budget constraint. 

\begin{lemma}\label{t:budget}  Suppose the advertiser faces a binding budget constraint.  Then if the advertiser bids $b_H (b_L)$ for users of type $H (L)$, it must be the case that $\frac{b_H}{b_L} = \frac{v_H}{v_L}$.
\end{lemma}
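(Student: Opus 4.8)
The plan is to treat the advertiser's decision as a two-variable constrained optimization and solve it with Lagrange multipliers. The objective is to maximize the expected surplus $\pi \int_0^{b_H} v_H f_H(p)\,dp + (1-\pi)\int_0^{b_L} v_L f_L(p)\,dp - B$ over the two bids $b_H$ and $b_L$, subject to the binding budget constraint $\pi \int_0^{b_H} p f_H(p)\,dp + (1-\pi)\int_0^{b_L} p f_L(p)\,dp = B$. The structure is convenient: in both the objective and the constraint, each bid appears only as the upper limit of an integral, so the first-order conditions reduce to evaluating integrands at the optimal bids via the fundamental theorem of calculus.

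Concretely, I would form the Lagrangian $\mathcal{L}$ by subtracting $\lambda$ times the constraint from the objective, and then differentiate with respect to each upper limit. Differentiating in $b_H$ gives $\pi v_H f_H(b_H) - \lambda \pi b_H f_H(b_H) = 0$, and differentiating in $b_L$ gives $(1-\pi) v_L f_L(b_L) - \lambda (1-\pi) b_L f_L(b_L) = 0$. Assuming $\pi \in (0,1)$ and that the densities $f_H(b_H), f_L(b_L)$ are strictly positive at the optimum, I can cancel the common factors to obtain the clean pair of conditions $v_H = \lambda b_H$ and $v_L = \lambda b_L$. Taking the ratio eliminates the shadow price $\lambda$ and yields $\frac{b_H}{b_L} = \frac{v_H}{v_L}$ immediately, which is the claim. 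The multiplier $\lambda$ has the natural reading as the marginal value of relaxing the budget, and the two conditions say that at the optimum the marginal value per unit of marginal expected spend is equalized across the two types.

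The main subtlety I expect to manage is the regularity that justifies the first-order conditions rather than the algebra itself. I need the optimum to be interior, so that the densities do not vanish at the optimal bids and the cancellation of $f_H(b_H)$ and $f_L(b_L)$ is legitimate, and I need the budget constraint to be active, which is supplied by the hypothesis that it binds. I would also note briefly that the stationary point is a maximizer rather than a minimizer or saddle: since increasing a bid raises expected spend while the marginal surplus it buys is the value net of nothing in the objective but the full price in the constraint, the trade-off is governed by the monotone relationship between $b_t$ and the cumulative spend, and the equalization condition identifies the genuine optimum. Degenerate cases where a density is zero on a relevant range, or where the unconstrained optimum already satisfies the budget, would have to be excluded or treated separately, but under the stated assumption of a binding constraint the Lagrange argument delivers the ratio condition directly.
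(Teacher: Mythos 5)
Your proof is correct and takes essentially the same approach as the paper: form the Lagrangian, differentiate with respect to the upper limits $b_H$ and $b_L$ to get $v_H = \lambda b_H$ and $v_L = \lambda b_L$, and take the ratio to eliminate $\lambda$. The regularity caveats you add (positive densities at the optimum, interior solution) are points the paper's proof glosses over, so your write-up is if anything slightly more careful.
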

\begin{proof} 
The advertiser chooses values of $b_L$ and $b_H$ that maximize the Lagrangian,
\begin{equation*}
\pi \int_{0}^{b_H} (v_H - \lambda p) f_H(p) dp + (1-\pi) \int_{0}^{b_L} (v_L - \lambda p) f_L(p) dp,
\end{equation*}
\noindent for some value of $\lambda$.  Differentiating with respect to $b_H$ and $b_L$ respectively indicates that the advertiser chooses values of $b_L$ and $b_H$ that satisfy $\pi (v_H - \lambda b_H) f_H(b_H) = 0$ and $(1-\pi) (v_L - \lambda b_L) f_L(b_L) = 0$, meaning the advertiser chooses values of $b_L$ and $b_H$ that satisfy $v_H = \lambda b_H$ and $v_L = \lambda b_L$.  From this it follows that $\frac{b_H}{b_L} = \frac{v_H}{v_L}$.  
\end{proof}

This result immediately implies that a budget-constrained advertiser's value for the data is

\begin{equation*}
u_D - u_{ND} = \pi v_H F_H \left(\frac{v_H b_L}{v_L}\right) + (1-\pi) v_L F_L(b_L) - \overline{v} F(b).
\end{equation*}

Now we address how small changes in the budget (which induce small changes in $b$, $b_L$, and $b_H$)  affect the advertiser's value for the data.  One might think that an advertiser's value for the data would always increase with the advertiser's budget because the advertiser would be better able to exploit the targeting information.  However, this is not the case as the following theorem illustrates:

\begin{theorem}\label{t:budget2}  The advertiser's value for the data is not monotone in her budget.

\end{theorem}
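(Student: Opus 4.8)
Since the claim is a non-monotonicity statement, the proof is a counterexample, which I would set up by first understanding the \emph{marginal} effect of the budget. Everything is determined by $B$ once we fix the distributions: the no-data bid $b$ solves $\int_0^b p f(p)\,dp = B$, and by Lemma~\ref{t:budget} the data bids satisfy $b_H = \frac{v_H}{v_L}b_L$ with $b_L$ pinned down by the data budget constraint. Differentiating the value formula $u_D - u_{ND} = \pi v_H F_H(\tfrac{v_H b_L}{v_L}) + (1-\pi) v_L F_L(b_L) - \bar{v}F(b)$ in $B$ and using the first-order conditions from the proof of Lemma~\ref{t:budget} (namely $v_H = \lambda b_H$ and $v_L = \lambda b_L$), the terms collapse by an envelope-type cancellation to the gap in the two budget shadow prices:
\[
\frac{d}{dB}\,(u_D - u_{ND}) \;=\; \frac{v_L}{b_L} - \frac{\bar{v}}{b} \;=\; \frac{v_H}{b_H} - \frac{\bar{v}}{b}.
\]
Thus monotonicity would require this difference of ``value-per-unit-bid'' ratios to keep a constant sign for all $B$, and the whole problem reduces to showing it can change sign.

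\textbf{Mechanism (different un-binding thresholds).} Let $B_D^\ast = \pi\int_0^{v_H} p f_H(p)\,dp + (1-\pi)\int_0^{v_L} p f_L(p)\,dp$ and $B_{ND}^\ast = \int_0^{\bar{v}} p f(p)\,dp$ be the budgets at which, respectively, the data and no-data constraints stop binding (i.e.\ where the advertiser can afford to bid true values / to bid $\bar{v}$). Beyond its threshold each utility is constant at its unconstrained value. I would choose distributions with $B_D^\ast < B_{ND}^\ast$; a direct computation shows this is equivalent to
\[
(1-\pi)\int_{v_L}^{\bar{v}} p f_L(p)\,dp \;>\; \pi\int_{\bar{v}}^{v_H} p f_H(p)\,dp,
\]
i.e.\ the extra spend on $L$-impressions from overbidding relative to $v_L$ outweighs the forgone spend on $H$-impressions. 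This is easy to arrange (for instance, make the $L$-competition cluster just above $v_L$ while the $H$-competition sits mostly above $v_H$, so the advertiser rarely wins $H$ anyway). On the interval $(B_D^\ast, B_{ND}^\ast)$ we then have $b_L = v_L$, so $v_L/b_L = 1$, while the no-data constraint still binds, so $b < \bar{v}$ and $\bar{v}/b > 1$; hence the derivative above is strictly negative and the value of data strictly decreases on this interval, falling to the unconstrained value $U^\ast$ exactly at $B_{ND}^\ast$.

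\textbf{Conclusion.} The value of data is $0$ at $B = 0$ and is nonnegative for every $B$ (with a perfectly informative signal the advertiser can always ignore it and replicate the no-data strategy, so $u_D \ge u_{ND}$). Since it decreases strictly on $(B_D^\ast, B_{ND}^\ast)$ down to its eventual limit $U^\ast \ge 0$, its value at $B_D^\ast$ strictly exceeds $U^\ast$: it rises above its own limiting value at a finite budget and then comes back down. Concretely, $u_D - u_{ND}$ is strictly larger at $B_D^\ast$ than at the larger budget $B_{ND}^\ast$, while it is strictly larger at $B_D^\ast$ than at $B=0$; hence it is neither non-decreasing nor non-increasing, establishing the theorem. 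The main obstacle is certifying this sign flip with a legitimate instance, i.e.\ exhibiting explicit $f_H, f_L, \pi, v_H, v_L$ meeting the displayed inequality and verifying the behavior at the regime boundaries; the derivative computation itself is routine given Lemma~\ref{t:budget}. As a cross-check I would also confirm the derivative is positive for small $B$ (when $f_H(0)=f_L(0)$ this follows from $\pi v_H^2 + (1-\pi)v_L^2 > \bar{v}^2$, i.e.\ the convexity/variance inequality), which independently witnesses that the marginal value changes sign.
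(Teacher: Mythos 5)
Your proof strategy is correct, and it is a genuine variant of the paper's argument rather than a reproduction of it. The key identity is the same: your shadow-price formula $\frac{d}{dB}(u_D - u_{ND}) = \frac{v_L}{b_L} - \frac{\bar{v}}{b}$ is exactly the paper's computation $\frac{d}{db}[u_D - u_{ND}] = \bigl(\frac{v_L b}{b_L} - \bar{v}\bigr)f(b)$ after dividing by $\frac{dB}{db} = b f(b)$, and both proofs get the ``increase'' half for free from $V(0)=0$. Where you differ is the regime that makes the derivative negative. The paper keeps \emph{both} budget constraints binding, takes $f_H = f_L = f$ supported on $[0,\frac{1}{2})$, and sends $b, b_L \to \frac{1}{2}$ with $b_H$ priced out of the support, so the two shadow-price ratios converge to $2v_L$ versus $2\bar{v}$. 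You instead separate the saturation thresholds $B_D^* < B_{ND}^*$ and run the decrease where the data constraint has gone slack ($v_L/b_L = 1$) while the no-data constraint still binds ($\bar{v}/b > 1$). These are genuinely different instances: in the paper's example your interval is empty, because $b_L \to \frac{1}{2}$ forces $v_L \geq \frac{1}{2}$ (bids are always shaded below values), so truthful data bidding and bidding $\bar{v}$ exhaust the same spend $\frac{1}{4}$, i.e.\ $B_D^* = B_{ND}^*$ there. Your route buys a general characterization (the displayed integral inequality is exactly $B_D^* < B_{ND}^*$) and the clean qualitative picture that the value of data overshoots its unconstrained limit $U^*$; the paper's buys a fully explicit closed-form example.

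One caveat you should repair: your suggested instantiation (``$L$-competition clustered just above $v_L$, $H$-competition above $v_H$'') requires $f_H \neq f_L$, i.e.\ prices correlated with the type, and in that setting the formulas you rely on fail --- the no-data utility is no longer $\bar{v}F(b) - B$, the no-data shadow price is $\frac{\pi v_H f_H(b) + (1-\pi)v_L f_L(b)}{b\,(\pi f_H(b) + (1-\pi)f_L(b))}$ rather than $\frac{\bar{v}}{b}$, and the unconstrained no-data bid is the fixed point $v^*$ of the paper's correlated-value equation, not $\bar{v}$, so your $B_{ND}^*$ is misdefined. The fix is to realize your inequality inside the type-independent setting your derivation actually assumes: with $f_H = f_L = f$ it reads $(1-\pi)\int_{v_L}^{\bar{v}} p f(p)\,dp > \pi\int_{\bar{v}}^{v_H} p f(p)\,dp$, which holds whenever $f$ concentrates its mass in $(v_L,\bar{v})$. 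For instance, $v_L = 1$, $v_H = 2$, $\pi = \frac{1}{2}$, $f$ uniform on $[1,\frac{3}{2}]$ gives $B_D^* = \frac{5}{8} < \frac{5}{4} = B_{ND}^*$, with $u_D - u_{ND}$ rising from $0$ at $B=0$ to about $0.176$ at $B_D^*$ and falling to $0.125$ at $B_{ND}^*$. With that substitution your argument is complete.
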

\begin{proof}
First note that it is possible for the advertiser's value for the data to increase as a result of a small change in the advertiser's budget.  If the advertiser initially has a budget of zero, then she also has zero value for the data.  But if the advertiser has a small positive budget, then the advertiser has a strictly positive value for the data.

Next we show that it is possible for the advertiser's value for the data to {\em decrease} as a result of small increases in the advertiser's budget.  To see this, consider the special case where $\pi = \frac{1}{2}$ and $f_H(\cdot) = f_L(\cdot) = f(\cdot)$, where $f(p) = 2$ if $p \in [0, \frac{1}{2})$ and $f(p) = 0$ otherwise.  Consider values of $B$ such that the optimal values of $b$ and $b_L$ are both very close to $\frac{1}{2}$ (and $b_H > \frac{1}{2}$).  We then have:
%Given the general expression of $u_D - u_{ND}$, we have
\begin{align*}
%
%\begin{equation*}
\frac{d}{db} [u_D - u_{ND}] 
 = \left[\pi \frac{v_H^2}{v_L} f_H \left(\frac{v_H b_L}{v_L}\right) + (1-\pi) v_L f_L(b_L) \right]\frac{db_L}{db} - \overline{v} f(b) 
%\end{equation*}
%
%\noindent so in this special case we have
%
%\begin{equation*}
%\\&=\frac{d}{db} [u_D - u_{ND}] = 
=\frac{1}{2} v_L f(b_L) \frac{db_L}{db} - \overline{v} f(b)
%\end{equation*}
\end{align*}
Next note that
\begin{equation*}
\int_{0}^{b} p f(p) dp = \pi \int_{0}^{\frac{v_H b_L}{v_L}} p f_H(p) dp + (1-\pi) \int_{0}^{b_L} p f_L(p) dp.
\end{equation*}
Differentiating the left-hand side of this equation with respect to $b$ gives $b f(b)$ and differentiating the right-hand side of this equation with respect to $b_L$ gives $\pi (\frac{v_H}{v_L})^2 b_L f_H(\frac{v_H b_L}{v_L}) + (1-\pi) b_L f_L(b_L)$.  From this it follows that
\begin{align*}
\frac{db_L}{db} &= \frac{b f(b)}{\pi (\frac{v_H}{v_L})^2 b_L f_H(\frac{v_H b_L}{v_L}) + (1-\pi) b_L f_L(b_L)}
%\end{equation*}
%
%%\noindent Thus if we again consider the special case that has been the focus of this proof, we find that
%
%\begin{equation*}
 =\frac{b f(b)}{\frac{1}{2} b_L f(b_L)}
\end{align*}
By combining the expressions we have derived for $\frac{d}{db} [u_D - u_{ND}]$ and $\frac{db_L}{db}$, we see that
\begin{equation*}
\frac{d}{db} [u_D - u_{ND}] = \left(\frac{v_L b}{b_L} - \overline{v}\right) f(b).
\end{equation*}
But in the limit as $b$ and $b_L$ become arbitrarily close to $\frac{1}{2}$, the above expression approaches $(v_L - \overline{v}) f(b) < 0$.  Thus it is possible to have $\frac{d}{db} [u_D - u_{ND}] < 0$, and the advertiser's value for the data may decrease as a result of increases in the advertiser's budget.  
\end{proof}

While it is intuitive that an advertiser's value for the data may increase with the advertiser's budget, it may be less obvious why an advertiser's value for the data might decrease with the advertiser's budget.  To see why this might arise, suppose the advertiser always has a larger value for all advertising opportunities than the competing advertisers.  In this case, if an advertiser has a large budget, the data hardly has any effect on the impressions that the advertiser purchases since she would purchase almost all impressions anyway.  However, if the advertiser has a smaller budget, then the targeting data may have a significant effect on which advertising opportunities the advertiser wins.  Thus the advertiser's value for the data may be decreasing in the size of the advertiser's budget.

\subsection{Correlated Value Setting}
\label{sec:correlated}

In Sections 3.1 and 3.2 we restricted attention to scenarios in which there is no correlation between an advertiser's value for an advertising opportunity and the highest competing bid.  However, in practice this assumption does not always hold because user characteristics like income make advertising opportunities more or less valuable to multiple advertisers at the same time. In this section, we address how this possibility affects an advertiser's value for targeting data.

To illustrate our results, we again consider the binary case, with users of type $H$ and $L$.  However, we now assume that if the user is of type $H (resp., L)$, then the highest bid placed by a competing advertiser is a random draw from the distribution $F_H(\cdot) (resp., F_L(\cdot))$ with pdf $f_H(\cdot) (resp., f_L(\cdot))$.

Before we can figure out how the advertiser would value targeting data, we must first figure out the bidding strategy that the advertiser would use when she does not have access to this targeting data.  

\begin{lemma}
With continuous densities, in equilibrium, the advertiser will bid some amount $v^* \in [v_L, v_H]$ that satisfies
\begin{equation}\label{e:correlated-value-bid}
v^* = \frac{\pi f_H(v^*) v_H + (1-\pi) f_L(v^*) v_L }{\pi f_H(v^*) + (1-\pi) f_L(v^*)}.
\end{equation}
\end{lemma}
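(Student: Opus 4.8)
The plan is to treat this as a single-agent optimization. Since the advertiser cannot observe the type, she must commit to one bid $b$ for every impression; if she bids $b$ she wins exactly when the highest competing bid $p$ satisfies $p < b$ and then pays $p$. Conditioning on the unobserved type, her expected utility is
\begin{equation*}
U(b) = \pi \int_0^b (v_H - p) f_H(p)\, dp + (1-\pi) \int_0^b (v_L - p) f_L(p)\, dp,
\end{equation*}
and the whole lemma reduces to analyzing the maximizer of this one-dimensional function.

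First I would differentiate, obtaining $U'(b) = \pi (v_H - b) f_H(b) + (1-\pi)(v_L - b) f_L(b)$. Setting $U'(v^*) = 0$ and isolating the term linear in $b$ yields the fixed-point relation \eqref{e:correlated-value-bid} directly, after dividing by the positive quantity $\pi f_H(v^*) + (1-\pi) f_L(v^*)$; so every interior critical point satisfies the claimed equation and conversely. The useful reading is that the right-hand side is a convex combination of $v_H$ and $v_L$ with weights proportional to the densities $f_H(v^*)$ and $f_L(v^*)$, i.e. $v^*$ equals the posterior expected value conditional on the pivotal event that the highest competing bid equals $v^*$. This is exactly why the optimal bid is a density-weighted average rather than the prior mean $\bar v$ used in the IPV case, which is the conceptual heart of the result.

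Next I would argue the maximizer is interior to $[v_L, v_H]$ so that the first-order condition genuinely applies. Evaluating the derivative at the endpoints gives $U'(v_L) = \pi(v_H - v_L) f_H(v_L) \ge 0$ and $U'(v_H) = (1-\pi)(v_L - v_H) f_L(v_H) \le 0$, so utility is nondecreasing at the low end and nonincreasing at the high end and no bid outside $[v_L, v_H]$ can be optimal. Moreover, since the right-hand side of \eqref{e:correlated-value-bid} is always a convex combination of $v_H$ and $v_L$, any solution automatically lies in $[v_L, v_H]$, giving the stated range for free.

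For existence I would apply the intermediate value theorem: setting $\phi(b)$ equal to the right-hand side of \eqref{e:correlated-value-bid} and $g(b) = \phi(b) - b$, continuity of the densities makes $\phi$ continuous, with $g(v_L) \ge 0$ and $g(v_H) \le 0$, so $g$ has a zero $v^* \in [v_L, v_H]$. The main obstacle I anticipate is not the calculus but the regularity bookkeeping: one must ensure the denominator $\pi f_H(b) + (1-\pi) f_L(b)$ stays positive on the relevant range, so that $\phi$ is well defined and continuous, and handle the degenerate cases where a density vanishes at an endpoint, which is precisely where the ``continuous densities'' hypothesis does its work.
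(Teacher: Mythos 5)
Your proof is correct and takes essentially the same route as the paper: the paper states the first-order condition verbally---a small deviation from $v^*$ only matters on the pivotal event that the highest competing bid equals $v^*$, so the optimal bid must equal the posterior expected value conditional on that tie---which is exactly what your explicit computation of $U'(b) = \pi (v_H - b) f_H(b) + (1-\pi)(v_L - b) f_L(b) = 0$ formalizes. Your existence argument via the intermediate value theorem (the right-hand side is a convex combination of $v_L$ and $v_H$, so $\phi(b)-b$ changes sign on $[v_L, v_H]$) is identical to the paper's, with your remark about keeping the denominator positive being a minor regularity point the paper glosses over.
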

\begin{proof}
Note that, conditional on the highest bid from competing advertisers being $v^*$, the probability the user is type $H$ is

\begin{equation*}
\frac{\pi f_H(v^*)}{\pi f_H(v^*) + (1-\pi) f_L(v^*)}
\end{equation*}

\noindent Thus if the highest bid from competing advertisers is $v^*$, then an advertiser's expected value for the advertising opportunity is

\begin{equation*}
\frac{\pi f_H(v^*) v_H + (1-\pi) f_L(v^*) v_L }{\pi f_H(v^*) + (1-\pi) f_L(v^*)}
\end{equation*}

Now when an advertiser decides whether to make a small deviation from bidding $v^*$, the advertiser can condition on the possibility that the highest bid from competing advertisers is also $v^*$, as a small change in the advertiser's bid will not have any effect on whether the advertiser wins the auction otherwise.  Thus in order for a bid to be optimal, it must be the case that the bid is equal to the advertiser's expected value for winning an advertising opportunity conditional on the highest bid from competing advertisers also being equal to this bid.  From this it follows that the advertiser must follow a strategy of bidding some amount $v^*$ that satisfies the equation

\begin{equation*}
v^* = \frac{\pi f_H(v^*) v_H + (1-\pi) f_L(v^*) v_L }{\pi f_H(v^*) + (1-\pi) f_L(v^*)}
\end{equation*}

\noindent when the advertiser does not have access to the data.

To see that there is some $v^* \in [v_L, v_H]$ that satisfies this equation, note that the right-hand side of the above equation is always greater than or equal to $v_L$ and less than or equal to $v_H$.  Thus when $v^* = v_L$, the left-hand side of this equation is less than or equal to the right-hand side, and when $v^* = v_H$, the left-hand side of this equation is greater than or equal to the right-hand side.  By the intermediate value theorem, it then follows that there exists some $v^*$ that satisfies this equation. 
\end{proof}

\noindent Thus if the advertiser does not have access to the data,  her expected payoff is $u_{ND} = \pi \int_0^{v^*} (v_H - p) f_H(p) dp + (1-\pi)\int_0^{v^*} (v_L - p) f_L(p) dp$.  However, if the advertiser has access to the data, then she places a bid of $v_H (v_L)$ for users of type $H (L)$.  The advertiser's expected payoff is then $u_D = \pi \int_0^{v_H} (v_H - p) f_H(p)dp + (1-\pi)\int_0^{v_L} (v_L - p) f_L(p) dp$.  Her value for the data is the difference between the two expressions and it depends on similar things in the correlated value setting as in the private value setting.  In particular, we obtain the following result:

\begin{theorem}\label{t:correlated-value}  An advertiser's value for targeting data is increasing in her utility difference between advertising to different users ($v_H - v_L$), increasing in the likelihood of a competing advertiser placing a bid between these possible values ($f_H(p)$ for $p \in [v^*, v_H]$ and $f_L(p)$ for $p \in [v_L, v^*]$), and is single-peaked in the relative likelihoods of the different realizations of the targeting data ($\pi$).

\end{theorem}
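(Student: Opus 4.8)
The plan is to first put the value of data into a transparent closed form and then handle each of the three claims separately, with the single recurring device being the envelope theorem. Starting from the expressions for $u_D$ and $u_{ND}$ derived just before the statement and using $v_L \le v^* \le v_H$, I would cancel the common integrals over $[0, v^*]$ to obtain
$$u_D - u_{ND} = \pi \int_{v^*}^{v_H} (v_H - p) f_H(p)\, dp + (1-\pi)\int_{v_L}^{v^*} (p - v_L) f_L(p)\, dp,$$
in which both integrands are nonnegative over their ranges. The crucial preliminary observation is that the equilibrium bid $v^*$ is exactly the maximizer of the no-data payoff $u(b) = \pi\int_0^b (v_H - p) f_H(p)\,dp + (1-\pi)\int_0^b (v_L - p) f_L(p)\,dp$: its first-order condition $u'(b)=0$ is precisely the fixed-point equation \eqref{e:correlated-value-bid}. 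This lets me differentiate $u_{ND}$ with respect to any parameter while ignoring the induced change in $v^*$.

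For the dependence on $v_H - v_L$, I would show the value rises in $v_H$ and falls in $v_L$. In $u_D$ the bid equals the value, so differentiating $\int_0^{v_H}(v_H-p)f_H$ in $v_H$ kills the boundary term and leaves $\pi F_H(v_H)$; the envelope theorem applied to $u_{ND}$ gives $\pi F_H(v^*)$, so $\partial(u_D - u_{ND})/\partial v_H = \pi[F_H(v_H) - F_H(v^*)] \ge 0$ since $v^* \le v_H$. The symmetric computation yields $\partial(u_D-u_{ND})/\partial v_L = (1-\pi)[F_L(v_L) - F_L(v^*)] \le 0$. For the dependence on the competing-bid density, I would perturb $f_H \to f_H + \epsilon\eta_H$ with $\eta_H \ge 0$ supported on $[v^*, v_H]$ (and analogously $f_L$ on $[v_L, v^*]$); the envelope theorem again discards the $v^*$-term, leaving a directional derivative $\pi\int_{v^*}^{v_H}(v_H - p)\eta_H(p)\,dp \ge 0$, which is the claimed monotonicity in $f_H$ on $[v^*, v_H]$ and $f_L$ on $[v_L, v^*]$.

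The single-peakedness in $\pi$ is where I expect the real work, and the cleanest route avoids computing $dv^*/d\pi$ entirely. Write $u_D = \pi A_H + (1-\pi) A_L$ with $A_H = \int_0^{v_H}(v_H-p)f_H(p)\,dp$ and $A_L = \int_0^{v_L}(v_L-p)f_L(p)\,dp$ both independent of $\pi$, so $u_D$ is affine in $\pi$. Meanwhile $u_{ND}(\pi) = \max_b\big[\pi\int_0^b(v_H-p)f_H(p)\,dp + (1-\pi)\int_0^b(v_L-p)f_L(p)\,dp\big]$ is a pointwise maximum of functions affine in $\pi$, hence convex in $\pi$. Therefore $u_D - u_{ND}$ is the difference of an affine and a convex function, i.e.\ concave on $[0,1]$, and a concave function is automatically single-peaked (quasiconcave); combined with the fact that it vanishes at $\pi = 0$ and $\pi = 1$ (where $v^* = v_L$ and $v^* = v_H$ respectively), this gives a single interior peak.

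The main obstacle is making the $\pi$-claim rigorous: the payoff appears to depend on $\pi$ both directly and through the equilibrium $v^*$, and a brute-force derivative would be messy and sign-ambiguous. Recognizing $u_{ND}$ as the upper envelope of a family of linear-in-$\pi$ functions sidesteps this and delivers concavity in one stroke; the same envelope-theorem philosophy that handles $v^*$ in the first two claims is what makes the third tractable. A secondary point to be careful about is that $v^*$ must be the argmax (not merely a critical point) of $u(b)$ for the envelope and convexity arguments to apply, which I would justify from the sign of $u'(b)$ (positive below $v_L$, negative above $v_H$) together with the assumed continuity of the densities.
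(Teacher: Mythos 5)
Your proposal is correct, and on the first two claims it is essentially the paper's own argument: the paper likewise works from the reduction $u_D - u_{ND} = \pi \int_{v^*}^{v_H} (v_H - p) f_H(p)\,dp + (1-\pi)\int_{v_L}^{v^*}(p - v_L) f_L(p)\,dp$, reports the derivatives $\frac{d}{dv_H}[u_D - u_{ND}] = \pi\int_{v^*}^{v_H} f_H(p)\,dp > 0$ and $\frac{d}{dv_L}[u_D - u_{ND}] = -(1-\pi)\int_{v_L}^{v^*} f_L(p)\,dp < 0$ (silently dropping the $dv^*$ terms, which vanish precisely by the first-order-condition/envelope reasoning you make explicit), and reads the monotonicity in $f_H$ and $f_L$ directly off the nonnegative integrands. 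Where you genuinely diverge is the single-peakedness in $\pi$. The paper computes $\frac{d}{d\pi}[u_D - u_{ND}] = \int_{v^*}^{v_H}(v_H - p) f_H(p)\,dp - \int_{v_L}^{v^*}(p - v_L) f_L(p)\,dp$, notes it is positive in the limit $\pi \to 0$ and negative as $\pi \to 1$, and argues it crosses zero only once because $v^*$ is strictly increasing in $\pi$ --- a monotonicity claim the paper asserts without proof. Your route (that $u_D$ is affine in $\pi$, while $u_{ND} = \max_b\bigl[\pi\int_0^b(v_H-p)f_H(p)\,dp + (1-\pi)\int_0^b(v_L-p)f_L(p)\,dp\bigr]$ is a pointwise maximum of affine functions of $\pi$ and hence convex, so the difference is concave and, vanishing at both endpoints, has a single interior peak) sidesteps that unproven claim, requires no smoothness of $v^*$ in $\pi$, and is in that respect tighter than the original. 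What the paper's computation buys in exchange is quantitative information: an explicit expression whose unique zero locates the peak $\pi^*$, and strict increase/decrease on either side of it (granting the monotonicity of $v^*$), whereas concavity alone yields the weak form of single-peakedness unless supplemented by strict positivity of the value on $(0,1)$. Your closing caveat --- that $v^*$ must be the global argmax of the no-data payoff, not merely a solution of equation \eqref{e:correlated-value-bid}, for both the envelope step and the convexity step --- is exactly the right hypothesis to flag, and it is implicitly used (but not discussed) in the paper's proof as well.
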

\begin{proof}
Note that
\begin{align*}
\frac{d}{dv_H} [u_{D} - u_{ND}] &= 
%pi \int_{v^*}^{v_H} f_H(p) dp - \pi \frac{dv^*}{dv_H} (v_H - v^*)f_H(v^*) + (1-\pi)\frac{dv^*}{dv_H}(v^* - v_L) f_L(v^*) \\
 \pi \int_{v^*}^{v_H} f_H(p) dp > 0\\
%\end{align*}
%
%\noindent Also note that
%
%\begin{align*}
\frac{d}{dv_L} [u_{D} - u_{ND}] &= %-(1-\pi) \int_{v_L}^{v^*} f_L(p) dp - \pi \frac{dv^*}{dv_L} (v_H - v^*)f_H(v^*) + (1-\pi) \frac{dv^*}{dv_L} (v^* - v_L) f_L(v^*) \\
 -(1-\pi) \int_{v_L}^{v^*} f_L(p) dp < 0
\end{align*}

Thus the advertiser's value for the data is increasing in $v_H$ and decreasing in $v_L$, meaning the advertiser's value for the data is also increasing in $v_H - v_L$.

Also note that the expression for $u_{D} - u_{ND}$ is increasing in $f_H(p)$ for all $p \in [v^*, v_H]$ and increasing in $f_L(p)$ for all $p \in [v_L, v^*]$.  Thus the advertiser's value for the data is also increasing in $f_H(p)$ for all $p \in [v^*, v_H]$ and increasing in $f_L(p)$ for all $p \in [v_L, v^*]$.

Finally note that
\begin{align*}
\frac{d}{d\pi} [u_{D} - u_{ND}] &= % \int_{v^*}^{v_H} (v_H - p) f_H(p) dp - \int_{v_L}^{v^*} (p - v_L) f_L(p) dp \\ & - \pi(v_H - v^*)f_H(v^*) \frac{dv^*}{d\pi}  + (1-\pi)(v^* - v_L)f_L(v^*)\frac{dv^*}{d\pi} \\
 \int_{v^*}^{v_H} (v_H - p) f_H(p) dp - \int_{v_L}^{v^*} (p - v_L) f_L(p) dp
\end{align*}
This expression tends to $\int_{v_L}^{v_H} (v_H - p) f_H(p) dp > 0$ in the limit as $\pi \to 0$ and $v^* \to v_L$ and tends to $-\int_{v_L}^{v_H} (p - v_L) f_L(p) dp < 0$ in the limit as $\pi \to 1$ and $v^* \to v_H$.  The expression is also strictly decreasing in $\pi$ since $v^*$ is strictly increasing in $\pi$, the first integral is strictly decreasing in $v^*$, and the second integral is strictly increasing in $v^*$.  Thus there is some $\pi^{*} \in (0, 1)$ such that $\frac{d}{d\pi} [u_{D} - u_{ND}] > 0$ for $\pi < \pi^{*}$ and $\frac{d}{d\pi} [u_{D} - u_{ND}] < 0$ for $\pi > \pi^{*}$.  From this it follows that the advertiser's value for the data, $u_{D} - u_{ND}$, is single-peaked in $\pi$.  
\end{proof}

While the advertiser's values for the data in the correlated and private value settings depend on similar terms, the two values are incomparable: the value is not always higher in one setting than the other.  In the correlated value framework, it is entirely possible for the advertiser to have zero value for the data, as the advertiser may be able to exploit the fact that the competing advertisers are perfectly segmenting the market, so that she always wins high value impressions  and always loses on the low value impressions.  In these circumstances, the advertiser's value for the data will be lower under the correlated value framework than in the private value framework.

At the same time, it is also possible that the advertiser could have greater value for the data in the correlated value framework than in the private value framework.  If the competing advertisers are making bids that are strongly correlated with the advertiser's value for the advertising opportunity, then the advertiser may not be able to profitably bid in the auction without access to the data.  In this case, the data is especially valuable for the advertiser in the correlated value framework, and the data may be more valuable under correlated values than under private values. We make this precise below. 

\begin{observation}\label{t:correlated-value2}  The advertiser's value for the data may be either greater or lower in the correlated value setting than in the private value setting.

\end{observation}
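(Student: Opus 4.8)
The plan is to prove this existence statement by exhibiting two instances that share the same primitives $v_H, v_L, \pi$ and the same marginal density $f$ of the highest competing bid, but differ only in how the competition is correlated with the user's type. The key observation is that the private-value benchmark depends solely on $(v_H, v_L, \pi, f)$: setting $q = 1$ (perfect data) in Lemma~\ref{l:noisy} gives a private-value data value of $\pi\int_{\bar v}^{v_H}(v_H - p) f(p)\,dp + (1-\pi)\int_{v_L}^{\bar v}(p - v_L) f(p)\,dp$ with $\bar v = \pi v_H + (1-\pi) v_L$, whereas subtracting the expressions for $u_{ND}$ and $u_D$ preceding Theorem~\ref{t:correlated-value} shows the correlated-value data value equals $\pi\int_{v^*}^{v_H}(v_H - p) f_H(p)\,dp + (1-\pi)\int_{v_L}^{v^*}(p - v_L) f_L(p)\,dp$, where $v^*$ is the no-data equilibrium bid from Equation~\eqref{e:correlated-value-bid} and $f = \pi f_H + (1-\pi) f_L$. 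Holding $(v_H, v_L, \pi, f)$ fixed and varying only the decomposition into $f_H$ and $f_L$ therefore moves the correlated value while leaving the private-value benchmark untouched, so it suffices to push the correlated value strictly above the benchmark in one instance and strictly below it in another.

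For concreteness I would take $\pi = \half$, $v_L = 0$, $v_H = 1$, and $f$ uniform on $[0,1]$, so that $\bar v = \half$ and the private-value benchmark is a fixed positive constant (an elementary computation gives $\frac{1}{8}$). For the instance in which correlation hurts, I would let $f_H$ be uniform on $[0,\half]$ and $f_L$ uniform on $[\half, 1]$, so that high-value users draw \emph{low} competing bids and the competition perfectly segments the market against the advertiser; for the instance in which correlation helps, I would instead place $f_H$ on $[\half, 1]$ and $f_L$ on $[0, \half]$, so that high-value users draw \emph{high} competing bids. Both instances induce the same uniform marginal $f$, hence the same private-value benchmark, so only the correlated value changes between them.

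In each instance the first substantive step is to pin down $v^*$. Because the two conditional densities have disjoint supports meeting at $\half$, the conditional expectation on the right-hand side of \eqref{e:correlated-value-bid} equals $v_H$ on one side of $\half$ and $v_L$ on the other; the monotonicity/intermediate-value reasoning used to prove that lemma then forces $v^* = \half$ in both examples. With $v^* = \half$, evaluating the correlated-value formula is routine: in the anti-correlated instance $f_H$ vanishes on $[v^*, v_H]$ and $f_L$ vanishes on $[v_L, v^*]$, so the correlated value is exactly $0$, strictly below the benchmark $\frac18$; in the correlated instance both integrals are strictly positive and a direct computation gives correlated value $\frac14$, strictly above the benchmark. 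Reporting these two inequalities completes the argument.

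I expect the main obstacle to be the determination of $v^*$, since the fixed-point equation \eqref{e:correlated-value-bid} degenerates at the boundary between the two supports, where the conditional mean jumps from $v_H$ to $v_L$; some care is needed to invoke the intermediate-value/monotonicity argument of the preceding lemma to conclude that the equilibrium no-data bid sits precisely at the split point rather than being ill-defined. Since that lemma is stated for continuous densities, I would note that one can replace the uniform pieces by strictly positive continuous densities supported on the respective halves (with a steep but smooth transition across $\half$) without disturbing the strict inequalities, so the Observation holds within the lemma's hypotheses. A secondary check is simply to confirm that the private-value benchmark, which must be computed from the induced marginal $f$ and not from $f_H$ or $f_L$ individually, indeed lies strictly between $0$ and $\frac14$; with the symmetric uniform construction this reduces to comparing a couple of elementary integrals.
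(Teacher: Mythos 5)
Your first instance (anti-correlated supports) is fine and matches the paper's own first example: there the conditional-value function $\phi(p) = \frac{\pi f_H(p)v_H + (1-\pi)f_L(p)v_L}{\pi f_H(p) + (1-\pi)f_L(p)}$ jumps \emph{down} across $p=\half$, so $\half$ is the unique (and utility-maximizing) solution of \eqref{e:correlated-value-bid}, bidding $\half$ perfectly segments the market, and the data value is $0 < \frac{1}{8}$. The gap is in your second instance. With $f_H$ supported on $[\half,1]$ and $f_L$ on $[0,\half]$, $\phi$ jumps \emph{up} across $\half$, so the first-order condition \eqref{e:correlated-value-bid} has three solutions (approximately $0$, $\half$, and $1$ after your smoothing), and the one near $\half$ is a local \emph{minimum} of the no-data utility $U(b) = \pi\int_0^b(v_H-p)f_H(p)\,dp + (1-\pi)\int_0^b(v_L-p)f_L(p)\,dp$: bidding slightly below $\half$ avoids buying overpriced $L$ impressions, and bidding slightly above $\half$ picks up underpriced $H$ impressions, so both deviations are profitable. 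The equilibrium no-data bids are $0$ or $1$, each giving $u_{ND}=0$, and hence the correlated value of the data is $u_D = \frac{1}{2}\int_{1/2}^1(1-p)\,2\,dp = \frac{1}{8}$, which exactly \emph{equals} the private-value benchmark rather than exceeding it; your reported value $\frac{1}{4}$ comes from evaluating the gain at the non-equilibrium bid $v^*=\half$. (You have in fact faithfully reproduced the paper's own proof of this direction, which makes the same implicit choice $v^*=\half$; the step genuinely fails in both.)

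The claimed direction is still true, but it needs an instance where the no-data utility is pinned down unambiguously. Take $\pi = \half$, $v_L=0$, $v_H=1$, $f_H(p)=2p$ and $f_L(p)=2(1-p)$ on $[0,1]$. These densities are continuous (so the lemma's hypotheses hold), the marginal is still uniform, $f = \pi f_H + (1-\pi)f_L \equiv 1$, so the private benchmark is still $\frac{1}{8}$; and now $\phi(p)=p$ identically, i.e., every bid satisfies \eqref{e:correlated-value-bid} and yields no-data utility exactly $0$, since $U'(b) = \pi(1-b)f_H(b) - (1-\pi)b f_L(b) = b(1-b)-b(1-b) = 0$ for all $b$. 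Thus $u_{ND}=0$ with no ambiguity about which fixed point is selected, while $u_D = \frac{1}{2}\int_0^1(1-p)\,2p\,dp = \frac{1}{6} > \frac{1}{8}$. Substituting this pair for your second instance (and keeping your first instance as is) closes the gap.
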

\begin{proof} 
First note that it is possible for the advertiser's value for the data to be lower in the correlated value setting than it is in the private value setting.  To see this, suppose that $\pi = \frac{1}{2}$, $v_L = 0$, $v_H = 1$, $f_L(p) = 2$ if $p \in [\frac{1}{2}, 1]$, $f_L(p) = 0$ otherwise, $f_H(p) = 2$ if $p \in [0, \frac{1}{2})$, and $f_H(p) = 0$ otherwise.  Then if the distribution of the highest bid from competing advertisers is a distribution with pdf $f(p) = \pi f_H(p) + (1-\pi)f_L(p) = 1$ (for $p \in [0, 1]$) regardless of the user's type, the advertiser's utility gain from having access to data is
\begin{equation*}
u_{D} - u_{ND} = \frac{1}{2} \int_{1/2}^{1} (1 - p) dp  + \frac{1}{2} \int_{0}^{1/2} p \; dp.
\end{equation*}
But if the distribution of the highest bid from competing advertisers comes from the distribution $F_H(\cdot)$ if the user is type $H$ and from the distribution $F_L(\cdot)$ if the user is type $L$, then the advertiser's utility gain from having access to the data is zero.  In this case, the best scenario for the advertiser would be to win all type $H$ impressions and to not win any type $L$ impressions.  But the advertiser can guarantee this without the data by simply always placing a bid of $b = \frac{1}{2}$.  Thus in this case the  advertiser's value for the data is zero, and it follows that the advertiser's value for the data is lower in the correlated value setting than it is in the private value setting.

Next, we show that it is possible for the advertiser's value for the data to be greater in the correlated value setting than it is in the private value setting.  To see this, suppose that $\pi = \frac{1}{2}$, $v_L = 0$, $v_H = 1$, $f_L(p) = 2$ if $p \in [0, \frac{1}{2})$, $f_L(p) = 0$ otherwise, $f_H(p) = 2$ if $p \in [\frac{1}{2}, 1]$, and $f_H(p) = 0$ otherwise.  Then if the highest bid from competing advertisers comes from a distribution with pdf $f(p) = \pi f_H(p) + (1-\pi)f_L(p) = 1$ (for $p \in [0, 1]$) regardless of the user's type, the advertiser's utility gain from having access to the data is again
\begin{equation*}
u_{D} - u_{ND} = \frac{1}{2} \int_{1/2}^{1} (1 - p) dp  + \frac{1}{2} \int_{0}^{1/2} p \; dp.
\end{equation*}
But if the highest bid from competing advertisers comes from the distribution $F_H(\cdot)$ if the user is type $H$ and from the distribution $F_L(\cdot)$ if the user is type $L$, then the advertiser's utility gain from having access to the data is
\begin{equation*}
u_{D} - u_{ND} = \frac{1}{2} \int_{1/2}^{1} (1 - p) \; 2 \; dp  + \frac{1}{2} \int_{0}^{1/2} p \; 2 \; dp,
\end{equation*}
\noindent which is strictly greater than the gain in the IPV model. Therefore, the advertiser's value for the data can also be greater in the correlated value setting than it is in the private value setting.  This proves the result. 
\end{proof}

\section{Game Theoretic Setting} \label{sec:multibuyers}

So far we have focused on the value a particular advertiser would place on targeting data while ignoring the possibility that other advertisers may also use this data.  We now ask how this possibility affects an individual advertiser's value for the data in a game-theoretic setting. 

\subsection{Value of Data}

We show that when there are two bidders, each bidder prefers the competitor to buy the data {\em regardless} of her own actions. However, the exact value an advertiser places on the data may go up or down as a function of the other bidder's actions, and there are situations where there are no pure strategy equilibria to the game where advertisers simultaneously decide whether to purchase the data. 

Consider the case with two advertisers. Each advertiser $i \in \{1,2\}$ has a value $v_{i}$ for a given advertising opportunity, where each $v_{i}$ is an independent draw from the distribution $F_{i}(v)$ with pdf $f_{i}(v)$.  If advertiser $i$ has access to the data, then advertiser $i$ knows her value for a particular advertising opportunity.  If not, then this advertiser only knows the distribution from which her value is drawn.  Naturally each advertiser obtains positive value from having access to the data regardless of whether the other advertiser also has access to the data.  Less obviously, each advertiser also has preferences over whether the other advertiser has access to the data.  In particular, we obtain the following result:

\begin{theorem}\label{t:two-buyers}  When there are two advertisers, each advertiser prefers that the other advertiser have access to the data regardless of whether she has access to the data herself.

\end{theorem}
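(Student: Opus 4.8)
The plan is to reduce all four informational scenarios to a single comparison governed by convexity and Jensen's inequality, exactly as in Theorem \ref{t:sosd}. First I would pin down the bidding strategies. Since the auction is second price, an advertiser who holds the data knows her realized value and bidding it truthfully is a dominant strategy. An advertiser without the data must submit a single bid knowing only her distribution $F_i$; because the price she pays conditional on winning depends only on the opponent's bid (not her own) and her value is independent of the opponent's bid, her expected utility is maximized by winning exactly when her expected value $\bar v_i = E[v_i]$ exceeds the opponent's bid, so she bids $\bar v_i$. Crucially, neither strategy depends on whether the opponent holds the data, so advertiser $1$'s own behavior is fixed across the comparison, which is what lets us prove the statement ``regardless'' of her own access.

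Next I would write advertiser $1$'s expected utility in closed form. Let $b_1$ denote advertiser $1$'s bid ($b_1 = v_1$ if she is informed, $b_1 = \bar v_1$ if not) and let $b_2$ denote advertiser $2$'s bid. In a second price auction advertiser $1$ wins and pays $b_2$ exactly when $b_1 > b_2$, receiving her realized value $v_1$. Writing $x^+ = \max(x,0)$ and using independence of $v_1$ from $b_2$, the expected utility collapses to $E[(v_1 - b_2)^+]$ when advertiser $1$ is informed, and to $E[(\bar v_1 - b_2)^+]$ when she is uninformed (in the latter case $E[v_1 \, \mathbf{1}(\bar v_1 > b_2)] = \bar v_1 \, P(\bar v_1 > b_2)$ by independence).

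The heart of the argument is that for any fixed constant $c$ the map $x \mapsto (c - x)^+$ is convex. Advertiser $2$ holding the data changes only the distribution of $b_2$: with the data $b_2 = v_2 \sim F_2$, while without it $b_2 = \bar v_2 = E[v_2]$ is deterministic with the \emph{same} mean, so advertiser $2$'s informed bid is a mean-preserving spread of her uninformed bid. I would then apply Jensen's inequality to $x \mapsto (c-x)^+$. In the case where advertiser $1$ is uninformed, taking $c = \bar v_1$ gives $E[(\bar v_1 - v_2)^+] \ge (\bar v_1 - \bar v_2)^+$ directly. In the case where advertiser $1$ is informed, I condition on $v_1$, set $c = v_1$ to get $E_{v_2}[(v_1 - v_2)^+] \ge (v_1 - \bar v_2)^+$, and then integrate over $v_1$. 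In both cases advertiser $1$'s utility is weakly larger when advertiser $2$ has the data, which is exactly the claim.

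The step I expect to require the most care is justifying the uninformed bidder's optimal bid together with the independence simplification: I must argue that bidding the mean $\bar v_i$ is optimal against the opponent's (possibly random) bid, which relies on the second-price payment rule decoupling the payment from one's own bid, combined with the independence of the two advertisers' values. I would also remark that the degenerate tie ($\bar v_1 = \bar v_2$, or ties on a measure-zero set) does not disturb the weak inequality. Everything beyond this is the same convexity computation already used in the proof of Theorem \ref{t:sosd}.
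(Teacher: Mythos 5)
Your proof is correct, but it takes a genuinely different route from the paper's. The paper fixes indicator notation $u_i(\delta_1,\delta_2)$, assumes without loss of generality that $\bar v_1 \geq \bar v_2$, writes each of the four relevant utilities as an explicit integral against $f_1$ and $f_2$, and compares them pairwise using the observation that integrating $(v_i - v_j)$ over the region where the integrand is positive dominates integrating it over any other region; for instance it shows
$u_1(1,0) = \int_0^\infty\!\int_{\bar v_2}^\infty (v_1 - v_2) f_2(v_2) f_1(v_1)\, dv_1\, dv_2 < \int_0^\infty\!\int_{v_2}^\infty (v_1 - v_2) f_2(v_2) f_1(v_1)\, dv_1\, dv_2 = u_1(1,1)$.
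You instead reduce everything to a single lemma: the advertiser's payoff is $E[(c - b_2)^+]$ where $c$ is her (expected) value, this is convex in the opponent's bid $b_2$, and the opponent acquiring data replaces her deterministic bid $\bar v_2$ by the mean-preserving spread $v_2$; Jensen's inequality then yields all four comparisons at once, symmetrically in the two advertisers and with no loss-of-generality step. The two arguments ultimately rest on the same elementary fact, $E[X^+] \geq (E[X])^+$, but your packaging is more uniform and explicitly reuses the convexity machinery of Theorem \ref{t:sosd}, whereas the paper's is more computational. Two minor points of comparison: (i) the paper asserts strict inequalities while Jensen gives weak ones, but strictness in \emph{either} argument requires the opponent's value distribution to put positive mass in the relevant region (e.g., $P(v_2 > \bar v_1) > 0$ for $u_1(0,1) > u_1(0,0)$), an assumption the paper leaves implicit, so you lose nothing real here; (ii) your justification that the uninformed bidder optimally bids $\bar v_i$ regardless of the opponent's strategy (via independence and the second-price decoupling of payment from own bid) is spelled out, whereas the paper takes these bidding strategies for granted.
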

\begin{proof}
Throughout this proof we let $\overline{v_{i}} \equiv E_{F_i}[v] $ and assume without loss of generality that $\overline{v_{1}} \geq \overline{v_{2}}$.  We also let $\delta_{i} \in \{0, 1\}$ be an indicator variable representing whether advertiser $i$ has access to the data.  Finally let $u_{i}(\delta_{1}, \delta_{2})$ denote advertiser $i$'s expected utility for given values of $\delta_{1}$ and $\delta_{2}$.

First note that
\begin{equation*}
u_{2}(1,0) = \int_{0}^{\overline{v_{2}}} (\overline{v_{2}} - v_{1}) f_{1}(v_{1}) dv_{1}  > 0 = u_{2}(0,0)
\end{equation*}
\noindent Also note that $u_{1}(0,0) = $
\begin{equation*}
%u_{1}(0,0) = 
\int_{0}^{\infty} (\overline{v_{1}} - v_{2}) f_{2}(v_{2}) dv_{2} < \int_{0}^{\overline{v_{1}}} (\overline{v_{1}} - v_{2}) f_{2}(v_{2}) dv_{2} = u_{1}(0,1)
\end{equation*}
\noindent and $u_1(1,0) = $
\begin{align*}
 \int_{0}^{\infty} \int_{\overline{v_{2}}}^{\infty} (v_{1} - v_{2}) f_{2}(v_{2}) f_{1}(v_{1}) dv_{1} dv_{2} 
< \int_{0}^{\infty} \int_{v_{2}}^{\infty} (v_{1} - v_{2}) f_{2}(v_{2}) f_{1}(v_{1}) dv_{1} dv_{2} = u_{1}(1,1)
\end{align*}
Similarly it follows that $u_{2}(0,1) < u_{2}(1,1)$.  Thus each advertiser prefers that the other advertiser have access to the data regardless of whether she has access herself. 
\end{proof}

To understand the intuition behind this result, first note that when neither advertiser has access to the data, then some advertiser (say advertiser $2$) never wins the auction. However, if advertiser $1$ has access to the data, then she may sometimes bid less than advertiser $2$ is bidding, which leads to advertiser $2$  earning positive profit.  Similarly, if advertiser $2$ has access to the data, then advertiser $1$ will sometimes pay less for advertising opportunities that she would have won anyway (because advertiser $2$ may discover that she values these impressions for less than she originally thought), but advertiser $1$ will not have to pay for impressions where advertiser $2$ learns that she values these impressions for more than advertiser $1$.  Thus advertiser $1$ is also better off when advertiser $2$ has access to the data.

This result does not necessarily extend when there are more than two advertisers.  Nonetheless, it does illustrate that an advertiser can have preferences over whether competing advertisers have access to targeting data, and that these preferences may be the opposite of what one might conjecture naively.

An advertiser's value for the data can also depend on whether the other advertiser has access to the data.  However, there is no general relationship as to how an advertiser's value for the data depends on whether the other advertiser has access to the data, even with two bidders. This is a corollary of Theorem \ref{t:correlated-value}: by purchasing the data the second advertiser may affect the competing bid distribution in an arbitrary manner, thereby changing the value of the data.

\begin{observation}\label{t:two-buyers2}  An advertiser's value for the data may either increase or decrease as a result of the other bidder having access to the data in both the private and correlated value framework.
\end{observation}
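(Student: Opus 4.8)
The plan is to prove both directions by exhibiting explicit distributions, after first isolating the mechanism: advertiser $2$'s decision to buy the data changes \emph{only} advertiser $2$'s bid, and hence only the distribution of the highest competing bid that advertiser $1$ faces. When advertiser $2$ has no data she bids the constant $\overline{v_2}$, so the competing bid is a point mass at $\overline{v_2}$; when she has data she bids $v_2 \sim F_2$. Since these two distributions share the mean $\overline{v_2}$, giving advertiser $2$ the data replaces the competing bid by a mean-preserving spread of it. Thus the question reduces to how advertiser $1$'s value for data responds to a mean-preserving spread of the competing bid distribution.

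In the private-value framework I would compute advertiser $1$'s value for the data directly from the general-distribution formula of Section~\ref{sec:general}: with competing density $f$ it equals $\int_0^\infty \int_{\overline{v_1}}^{v_1}(v_1-p) f(p)\,dp\, f_1(v_1)\,dv_1$. Interchanging the order of integration rewrites this as $\int_0^\infty \psi(p) f(p)\,dp$, where $\psi(p) = \int_p^\infty (v_1 - p) f_1(v_1)\,dv_1$ for $p \geq \overline{v_1}$ and $\psi(p) = \int_0^p (p - v_1) f_1(v_1)\,dv_1$ for $p < \overline{v_1}$. The key structural fact is that $\psi$ is convex on each of the intervals $(0,\overline{v_1})$ and $(\overline{v_1},\infty)$ (its second derivative is $f_1 \geq 0$ there), but has a \emph{concave} kink at $\overline{v_1}$, where $\psi'$ drops by exactly $1$. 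Because advertiser $1$'s value for the data is $E_f[\psi]$, Jensen's inequality shows that a mean-preserving spread of the competing bid raises the value when the spread stays inside a region where $\psi$ is convex, and can lower it when the spread straddles the concave kink at $\overline{v_1}$.

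This immediately yields both examples. Fixing $F_1$ uniform on $[0,2]$ (so $\overline{v_1}=1$ and $\psi(p)=p^2/4$ for $p\le 1$), taking $F_2$ supported below $1$ — e.g.\ uniform on $[0,1]$ — keeps the spread inside the convex region and makes advertiser $1$'s value for the data strictly larger when advertiser $2$ has the data; taking instead an $F_2$ placing mass on both sides of $1$ with mean just below $1$ makes the concave kink dominate and strictly lowers advertiser $1$'s value. In each case I would verify that the assumed bidding behaviour is consistent (advertiser $2$ bids $\overline{v_2}$ without data and $v_2$ with data, advertiser $1$ bids $\overline{v_1}$ without data). For the correlated-value framework the same mechanism applies, but now advertiser $2$'s purchase of data changes the type-conditional competing densities $f_H, f_L$ that advertiser $1$ faces; invoking Theorem~\ref{t:correlated-value}, I would choose $F_2$ so that acquiring data concentrates advertiser $2$'s bids inside the sensitive ranges $[v^*,v_H]$ and $[v_L,v^*]$ (raising advertiser $1$'s value) or pushes them out of those ranges (lowering it).

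The main obstacle is the conceptual step in the second paragraph: one must recognize that advertiser $1$'s value for the data is \emph{neither} a convex nor a concave functional of the competing bid distribution — it is convex away from $\overline{v_1}$ but concave at the kink — since it is precisely this mixed curvature that lets a single mean-preserving spread move the value in either direction. The remaining work is routine verification of the explicit integrals and of equilibrium consistency, together with re-deriving advertiser $1$'s bid $v^*$ in the correlated case.
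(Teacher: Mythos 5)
Your proposal is correct, but it proves the observation by a genuinely different route than the paper. The paper's proof is purely example-based: it exhibits two two-point discrete distributions ($v_1 \in \{4,8\}$, $v_2 \in \{2,5\}$ for the increase; $v_1 \in \{0,8\}$, $v_2 \in \{2,5\}$ for the decrease) and reasons case-by-case about who wins and at what price, showing advertiser $2$'s value for the data moves from zero to positive in the first example and from positive to zero in the second; the correlated case is then dispatched by a pointer to the construction in Theorem~\ref{t:no-pure-eq}. You instead isolate the structural mechanism: with dominant-strategy bidding, the opponent's purchase of data replaces the competing-bid point mass at $\overline{v_2}$ with the mean-preserving spread $F_2$, and advertiser $1$'s value for data is the functional $E_f[\psi]$, where $\psi$ is piecewise convex (second derivative $f_1$) on either side of $\overline{v_1}$ but has a concave kink there ($\psi'$ drops by $1$). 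Your computations check out: with $F_1$ uniform on $[0,2]$, a spread confined to $[0,1]$ raises the value ($1/16 \to 1/12$ for $F_2$ uniform on $[0,1]$), while a spread straddling the kink lowers it, exactly by Jensen on the respective regions. What your approach buys is an explanation of \emph{why} both directions are possible --- the value of data is neither a convex nor a concave functional of the competing-bid distribution --- and a recipe for generating examples; it also makes a nice counterpoint to Theorem~\ref{t:sosd}, where spreading one's \emph{own} value distribution always helps because the relevant function there is globally convex. What the paper's approach buys is self-contained elementarity: no interchange of integration, no consistency checks on bidding, and examples simple enough to verify by inspection. Two minor points to tighten if you write this up fully: state explicitly that dominant-strategy bidding makes advertiser $2$'s behavior independent of advertiser $1$'s data status (so the ``value of data'' comparison is well-posed), and note that your correlated-value paragraph, like the paper's, is a sketch --- it still requires exhibiting concrete type-conditional densities and re-solving for $v^*$ via Equation~(\ref{e:correlated-value-bid}) in each of the two scenarios.
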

\begin{proof}
First we show that an advertiser can have a greater value for the data if the competing advertiser has access to the data than if this advertiser does not have access to the data.  To see this, suppose that $v_{1}$ is either equal to $4$ or $8$, each with probability $\frac{1}{2}$, and $v_{2}$ is either equal to $2$ or $5$, again each with probability $\frac{1}{2}$.  In this case, if advertiser $1$ does not have access to the data, then advertiser $1$ always makes a bid of $6$, and advertiser $2$ would never want to bid more than advertiser $1$ even if advertiser $2$ knew that she had a value of $5$.  Thus in this case, advertiser $2$ has no value for the data if advertiser $1$ does not have access to the data.  However, if advertiser $1$ had access to the data, then advertiser $1$ would make a bid of $4$ in cases where advertiser $1$ learned that $v_{1} = 4$.  Advertiser $2$ would then want to win this auction if she knew that she had a value of $v_{2} = 5$ but not if she knew that $v_{2} = 2$, so advertiser $2$ would have a positive value for the data in this case.  This example illustrates that an advertiser can have a greater value for the data if the competing advertiser has access to the data than if this advertiser does not have access to the data. 

To see that an advertiser may have less value for the data if the competing advertiser has access to the data than if this competing advertiser did not have access to the data, suppose that $v_{1}$ is either equal to $0$ or $8$, each with probability $\frac{1}{2}$, and $v_{2}$ is either equal to $2$ or $5$, again each with probability $\frac{1}{2}$.  In this case, if advertiser $1$ does not have access to the data, then advertiser $1$ always makes a bid of $4$.  Advertiser $2$ would then want to win this auction if she knew that he had a value of $v_{2} = 5$ but not if she knew that $v_{2} = 2$, so advertiser $2$ would have a positive value for the data in this case.  However, if advertiser $1$ did have access to the data, then advertiser $1$ would either make a bid of $8$ or a bid of $0$.  In the first case, advertiser $2$ would never want to bid more than advertiser $1$ even if advertiser $2$ knew that she had a value of $5$, and in the second case, advertiser $2$ would always want to bid more than advertiser $1$ and win the advertising opportunity for a price of $0$ even if advertiser $2$ knew that he only had a value of $2$, so advertiser $2$ would have no value for the data in this case.  This example illustrates that an advertiser can have less value for the data if the competing advertiser has access to the data than if this competing advertiser did not have access to the data.  This proves the result for the private value framework.  The result for correlated values follows the analysis used to prove Theorem \ref{t:no-pure-eq}. 
\end{proof}

\subsection{Data Buying Equilibria}
We now turn to the question of whether there exists a pure strategy equilibrium to the game in which advertisers simultaneously decide whether to purchase data and then bid in the auction.  In general the fact that one advertiser may prefer everyone to have the data, whereas another may prefer to be the unique holder of the data, means that pure strategy equilibria to this game need not exist.  

\begin{theorem}\label{t:no-pure-eq}  There may be no pure strategy equilibrium to the game in which advertisers simultaneously decide whether to purchase a data source.
\end{theorem}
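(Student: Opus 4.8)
The plan is to prove the statement by exhibiting a single instance of the data-purchase game that admits no pure equilibrium. Since each advertiser has only two actions (buy the data at price $c$ or not), a pure equilibrium fails to exist exactly when the two best-response correspondences never meet, i.e.\ when the induced $2\times2$ game is strategically a matching-pennies game. Writing $D_i(\delta_{-i}) \equiv u_i(1,\delta_{-i}) - u_i(0,\delta_{-i})$ for advertiser $i$'s gross value of the data when the opponent plays $\delta_{-i}$, advertiser $i$ strictly prefers to buy iff $D_i(\delta_{-i}) > c$. A cycle through the four profiles $(0,0)\to(1,0)\to(1,1)\to(0,1)\to(0,0)$ with no rest point arises precisely when the data is a strategic substitute for one advertiser and a strategic complement for the other, namely $D_1(0) > c > D_1(1)$ while $D_2(1) > c > D_2(0)$ (or the symmetric swap). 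So the whole proof reduces to producing values and a price $c$ realizing these opposite-signed cross-effects.

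The key difficulty, and the reason the construction cannot live in the two-bidder IPV world of Section \ref{sec:general}, is that there the two advertisers' cross-effects always share the same sign: an opponent acquiring data either raises both advertisers' value of data or lowers both, so the induced game is always a coordination game (both complements) or an anti-coordination game (both substitutes), each of which has a pure equilibrium for every $c$. To decouple the two advertisers' incentives I would therefore work in the correlated value setting of Section \ref{sec:correlated}, where an opponent's purchase reshapes the type-conditioned distribution of the competing bid and, as already seen in Theorem \ref{t:correlated-value} and Observation \ref{t:two-buyers2}, can move an advertiser's value of data up or down depending on the configuration.

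Concretely, I would take a two-type instance (types $H$ and $L$ with prior $\pi$) and treat the advertisers' type-contingent values and $\pi$ as free parameters. In each of the four data configurations I would pin down equilibrium bidding: a data-holder bids her realized value, while a non-holder bids the fixed point of \eqref{e:correlated-value-bid} against the relevant (possibly type-revealing) competing strategy. This yields closed forms for the eight utilities $u_i(\delta_1,\delta_2)$, hence for $D_1(0), D_1(1), D_2(0), D_2(1)$. I would then choose the parameters so that advertiser $1$'s value of data rises when advertiser $2$ buys while advertiser $2$'s value of data falls when advertiser $1$ buys, combining the ``value increases'' mechanism and the ``value decreases'' mechanism that Observation \ref{t:two-buyers2} exhibits separately, and finally pick $c$ strictly between $D_1(1)$ and $D_1(0)$ and simultaneously between $D_2(0)$ and $D_2(1)$.

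The main obstacle is realizing both opposite-signed cross-effects inside one instance: the IPV intuition pushes the two advertisers' incentives in the same direction, so the correlation between $v_1$ and $v_2$ must be strong enough to flip one advertiser's cross-effect relative to the other, yet the two intervals $(D_1(1), D_1(0))$ and $(D_2(0), D_2(1))$ must both be nonempty and overlap enough to admit a common price $c$ (or, if separate prices $c_1,c_2$ are allowed, each interval need merely be nonempty). Verifying these inequalities requires solving the correlated no-data bid in each configuration, which is the bulk of the computation; once the four values $D_i(\delta_{-i})$ are in hand, the matching-pennies conclusion is immediate.
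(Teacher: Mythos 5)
Your high-level diagnosis is exactly right, and it matches the skeleton of the paper's own example: a pure equilibrium fails to exist precisely when one advertiser treats the data as a strategic substitute and the other as a strategic complement, so that the best responses cycle through the four purchase profiles. (In the paper's construction, Bella wants the data only when Alice lacks it, while Alice wants it only when Bella has it.) The problem is that your proposal stops at this diagnosis. The theorem is an existence claim about a counterexample, so the proof \emph{is} the instance: you must exhibit concrete value distributions, pin down the equilibrium bids in each of the four configurations, and verify the four strict inequalities $D_1(0) > c > D_1(1)$ and $D_2(1) > c > D_2(0)$. You explicitly defer all of this ("the bulk of the computation") and only assert that parameters realizing opposite-signed cross-effects exist in the two-bidder, two-type correlated-value setting. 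That assertion is the entire mathematical content of the theorem, and it is not self-evident: your own motivation leans on Theorem \ref{t:correlated-value} and Observation \ref{t:two-buyers2}, but those results exhibit the "value rises" and "value falls" mechanisms in \emph{separate} instances, and nothing you write shows the two mechanisms can coexist in a single instance with a common price $c$. As written, this is a plan for a proof, not a proof.

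It is also worth seeing how the paper sidesteps exactly the computation you flag as the obstacle. Rather than solving fixed points of \eqref{e:correlated-value-bid} in four configurations, it uses \emph{three} bidders with a degenerate correlation structure: $v_1 = 2v_2$, a third bidder Candice whose value is either $0$ (with probability $\nicefrac{1}{10}$) or $v_1 + 5$ and who always knows it, and Alice knowing her value on a vanishing fraction of impressions. This makes every best response a one-line argument: without data, any winning bid by Bella exceeds her value (since Alice bids at least her expectation of $2v_2$), so Bella bids $0$; against that, Alice needs no data to win the $v_3 = 0$ impressions; against a data-less Alice bidding her mean, Bella with cheap data profitably outbids her on high-value impressions when $v_3 = 0$; and against a data-holding Bella, Alice must buy the data to win those impressions back. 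If you want to salvage your two-bidder correlated-value route, you would need to actually produce the instance and check the inequalities — and note that the clean, computation-free verifiability of the paper's cycle is precisely what your plan currently lacks; it is not even clear a priori that a two-bidder instance suffices, which is presumably why the paper did not use one.
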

\begin{proof}
Suppose there are three advertisers, Alice, Bella and Candice. Bella's value, $v_2$, is a random draw from the exponential distribution with mean $1$, and Alice's value is double that, $v_1 = 2 v_2$.  Candice's value, $v_3$, is $0$ with probability $\nicefrac{1}{10}$ independently of the values of $v_1$ and $v_2$ and is $v_1 + 5$ otherwise. Candice always knows her exact value, but Alice and Bella learn the exact realizations of their values only if they purchase the data.  If Bella does not purchase the data, then she only knows the distribution from which her value is drawn.  If Alice does not purchase the data, then she knows her value on some small random fraction of the available advertising opportunities independent of the advertisers' values, but she only knows the distribution from which her value is drawn for all other advertising opportunities.

We now examine the strategies for the three advertisers. 
First note that if Bella does not buy the data, then she has a dominant strategy of bidding $0$ when she does not know her value: the only way for Bella to win is by bidding more than Alice, and thereby bidding more than her value, so she is content with bidding $0$ in these cases.  In this case Alice's best response is to also not buy the data since she wins all of the impressions where $v_3 = 0$ at zero cost and she will not win any of the other impressions even with the data. 

But if Alice does not buy the data and the data is sufficiently cheap, then Bella's best response is to buy the data since buying the data enables her to outbid advertisers $1$ and $3$ on some high value impressions, achieving a positive profit in expectation.  However, in this case, Alice's best response is to buy the data so that she can once again win all of the advertising opportunities where $v_3 = 0$. Hence there is no pure strategy equilibrium in this game when advertisers simultaneously decide whether to purchase the data.
\end{proof}

While it is generally the case that there will not exist a pure strategy equilibrium to the game in which advertisers simultaneously decide whether to purchase a given source of data, there are important special cases under which such pure strategy equilibria exist.  In particular, in a pure private values setting with symmetric bidders, we can prove that pure strategy equilibria exist. 

\begin{theorem}\label{t:pure-eq}  Suppose each advertiser's estimate of her value is an independent draw from the distribution $H(\cdot)$ if the advertiser purchases the data and from the distribution $G(\cdot)$ otherwise.  Then there is a pure strategy equilibrium to the game in which advertisers simultaneously decide whether to purchase the data.
\end{theorem}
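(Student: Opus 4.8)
The plan is to strip away the bidding stage, reduce the data-purchase game to a symmetric game with two actions per player, and then obtain a pure equilibrium by an elementary threshold (sign-crossing) argument that needs no monotonicity of the value of data.

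First I would dispose of the bidding subgame. Since each advertiser's estimate is her conditional expected value and the impressions are sold in a second-price auction with independent values, bidding one's realized estimate is a weakly dominant strategy at the bidding stage, independently of what the other advertisers do. Hence the only genuine strategic decision is the binary data-purchase choice, and it suffices to exhibit a pure equilibrium of the simultaneous-move game on $\{0,1\}^n$, where $n$ is the number of advertisers and $1$ denotes ``buy.'' Ties in the auction occur with probability zero under continuous $G$ and $H$, so they do not affect any of the expectations below.

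Next I would exploit symmetry to collapse each advertiser's payoff to a single scalar function. Fix a tagged advertiser and suppose $k$ of the remaining $n-1$ advertisers buy the data. Because all advertisers draw independently from the same pair of distributions ($H$ if they buy, $G$ if they do not), the distribution of the highest competing bid, and therefore the tagged advertiser's expected surplus, depends only on the count $k$ and not on the identities of the buyers. Writing $c$ for the price of the data, let $B(k)$ denote her expected utility from buying (gross surplus with own estimate $\sim H$, minus $c$) and $N(k)$ her expected utility from not buying (own estimate $\sim G$), each when $k$ of the others buy, and set $\Delta(k) = B(k) - N(k)$ for $k \in \{0,\dots,n-1\}$. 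A profile in which exactly $m$ advertisers buy is then a pure Nash equilibrium precisely when every buyer is content, $\Delta(m-1) \ge 0$ (for $m \ge 1$), and every non-buyer is content, $\Delta(m) \le 0$ (for $m \le n-1$).

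Finally I would produce such an $m$ from the sign pattern of $\Delta$ alone. If $\Delta(0) \le 0$ take $m = 0$ (nobody buys); if $\Delta(n-1) \ge 0$ take $m = n$ (everybody buys); otherwise $\Delta(0) > 0$ and $\Delta(n-1) < 0$, and letting $m$ be the smallest index with $\Delta(m) < 0$ yields $1 \le m \le n-1$ together with $\Delta(m-1) \ge 0$ and $\Delta(m) \le 0$. In each case the corresponding profile is a pure equilibrium. The one genuinely delicate point, and the reason the symmetric private-value hypothesis is essential, is precisely this reduction to a single scalar $\Delta(k)$: once the payoff difference depends only on the number of other buyers, existence follows from the crossing argument and requires nothing about whether data is a strategic complement or a strategic substitute. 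Indeed, Observation \ref{t:two-buyers2} shows that $\Delta$ need not be monotone, so I would deliberately avoid any supermodularity or potential-game route and rely only on the boundary-plus-crossing case analysis above.
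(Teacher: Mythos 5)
Your proof is correct and takes essentially the same approach as the paper: both exploit symmetry to reduce everything to the sign of the buy--versus--not-buy payoff difference as a function of the number of other purchasers, and then locate an equilibrium count by a minimal-index crossing argument. The only difference is presentational---you argue constructively, while the paper phrases the identical threshold argument as a proof by contradiction (its minimal $k$ is your $m+1$), leaving the symmetric reduction and the dominant-strategy bidding stage implicit.
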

\begin{proof}
Let $n$ denote the number of advertisers and suppose by means of contradiction that there does not exist a pure strategy equilibrium to the game in which advertisers simultaneously decide whether to purchase the data. Let $k$ denote the smallest positive integer for which if exactly $k$ of the advertisers choose to purchase the data, then one of the advertisers can profitably deviate by not purchasing the data. Note that some such $k \leq n$ must exist since all $n$ buyers purchasing the data is not an equilibrium by assumption. 

Thus, if exactly $k-1$ advertisers purchase the data, then no advertiser who purchases the data can profitably deviate by not purchasing the data. If this is not an equilibrium, then an advertiser who doesn't purchase the data can profitably deviate by purchasing the data. But this contradicts the definition of $k$ because this implies that if $k$ advertisers purchase the data, then these advertisers indeed prefer to be purchasing the data than to not be purchasing the data. Therefore, there must exist a pure strategy equilibrium to the game in which advertisers simultaneously decide whether to purchase the data.
\end{proof}

Finally, we present a result on the value of data when multiple symmetric bidders are all given access to the same data. In general, sources of data may refine an advertiser's estimate of her value by helping an advertiser learn that her true value for an advertising opportunity is $v + \epsilon$  for some mean-zero random variable $\epsilon$, where $v$ denotes the advertiser's original estimate of her value.  It is interesting to ask how the value of a data source depends on whether it helps advertisers distinguish amongst high or low value advertising opportunities.  We show below that the data is more valuable for picking out high valued impressions when multiple advertisers are given access to the same data.

To illustrate this, we consider a situation in which each advertiser's initial estimate of her value is an independent and identically distributed draw from the distribution $F(\cdot)$. We call a data source a {\em $(v^*, \delta)$-refinement} signal, if, given an initial estimate of an advertiser's value $v$:
\begin{itemize}
\item it provides no information for $v \not \in [v^* - \delta, v^*+\delta]$,
\item refines the value to  $v + \epsilon$ when $v \in [v^*-\delta, v^*+\delta]$ for some $\epsilon$ that is a random draw from some distribution with mean zero and narrow support.  %We refer to such a data source as a data source that refines an advertiser's estimate of her value for values near $v^*$.  For such a data source, we obtain the following result:
\end{itemize}

\begin{theorem}
\label{t:sym-eq}  Suppose there are at least three advertisers, and all advertisers are given access to a $(v^*, \delta)$ refinement signal for some $v^*$ in the support of $F(\cdot)$ and an arbitrarily small value of $\delta$.  Then there is a threshold $\tau$ in the interior of the support of $F(\cdot)$ such that the value of the data source to the advertisers is positive for values of $v^* > \tau$ and negative for values of $v^* < \tau$.
\end{theorem}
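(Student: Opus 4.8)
The plan is to reduce the question to the sign of the change in the \emph{total bidder surplus}, and then to a local, second-order-in-the-noise computation around $v^*$. Since bidding one's expected value is a dominant strategy in a second-price auction, access to the data simply means that an advertiser whose initial estimate $v$ lies in the slice $[v^*-\delta,v^*+\delta]$ bids her refined value $v+\epsilon$ rather than $v$, while every other advertiser keeps bidding her estimate. Writing $W$ for the expected true value of the winner and $R$ for the expected payment (the second-highest bid), the symmetric per-advertiser value of the data equals $\frac1n\big((W_D-R_D)-(W_{ND}-R_{ND})\big)$, so it suffices to determine the sign of $\Delta W-\Delta R$. Because $\delta$ is arbitrarily small, the probability that two advertisers fall in the slice is of lower order, so I would condition on a single slice advertiser, say advertiser $1$, with estimate $\approx v^*$ and true value $v^*+\epsilon$, and let $M$ and $M_2$ denote the highest and second-highest of the remaining $n-1$ bids (which are i.i.d.\ estimates drawn from $F$).

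First I would compute $\Delta W$. The realized welfare is $\max(v^*+\epsilon,M)$ when advertiser $1$ bids her true value and $\max(v^*,M)$ when she bids her estimate, so $\Delta W=E_\epsilon[\max(v^*+\epsilon,M)]-\max(v^*,M)$ is the Jensen gap of a convex function whose only kink (in $\epsilon$) sits at $M-v^*$. Integrating over $M$ concentrates this gap at $M=v^*$ and gives $\Delta W\approx \tfrac{\sigma^2}{2}\,h_M(v^*)$, where $\sigma^2=E[\epsilon^2]$ and $h_M(x)=(n-1)F(x)^{n-2}f(x)$ is the density of the top competing bid; in particular $\Delta W\ge 0$, reflecting the efficiency gain from more accurate bidding.

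Next I would compute $\Delta R$, which is the crux. As a function of advertiser $1$'s bid $x$, the second-highest bid is $R(x)=M_2+(x-M_2)^+-(x-M)^+$, a non-decreasing, piecewise-linear function with a convex kink at $x=M_2$ and a concave kink at $x=M$. Each kink contributes a Jensen gap of the same shape but opposite sign, giving $\Delta R\approx \tfrac{\sigma^2}{2}\big(h_{M_2}(v^*)-h_M(v^*)\big)$, with $h_{M_2}(x)=(n-1)(n-2)F(x)^{n-3}(1-F(x))f(x)$. Here one must take the expectation over $\epsilon$ \emph{before} integrating over the competitors' positions, since the mean-zero noise cancels a formally divergent first-order contribution and leaves a finite second-order bump of mass $\sigma^2/2$ concentrated at each kink. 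Combining the two pieces, $\Delta W-\Delta R\approx \tfrac{\sigma^2}{2}\big(2h_M(v^*)-h_{M_2}(v^*)\big)$, and substituting the order-statistic densities factors this as $\tfrac{\sigma^2}{2}(n-1)F(v^*)^{n-3}f(v^*)\big(nF(v^*)-(n-2)\big)$. Thus the sign of the value of the data is exactly the sign of $nF(v^*)-(n-2)$, which is positive precisely when $F(v^*)>\frac{n-2}{n}$. Setting $\tau=F^{-1}\!\big(\tfrac{n-2}{n}\big)$ and using monotonicity of $F$ yields a single threshold with the value positive for $v^*>\tau$ and negative for $v^*<\tau$; since $\frac{n-2}{n}\in(0,1)$ for $n\ge 3$, the point $\tau$ lies in the interior of the support, as claimed.

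The main obstacle is the revenue step. The welfare side is a clean single-kink Jensen argument, but the payment function carries two kinks of opposite curvature whose naive individual contributions diverge; the effect survives only because the mean-zero refinement cancels the first-order parts, leaving the finite second-order term that I must extract by expanding in $\epsilon$ first and only then integrating against the order-statistic densities. I would also need to justify rigorously the reduction to a single slice advertiser (controlling the $O(\delta^2)$ two-in-the-slice events) and to use the ``narrow support'' hypothesis on $\epsilon$ to legitimise replacing $h_M$ and $h_{M_2}$ by their values at $v^*$ in the local expansions. The resulting sign condition matches the intuition that when $v^*$ is high the advertiser is usually the top bidder, so refinement mainly depresses the price she pays (the concave upper kink dominates), whereas when $v^*$ is low refinement mainly raises the price paid by whoever does win (the convex lower kink dominates), which hurts the bidders on net.
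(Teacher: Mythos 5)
Your proposal is correct in substance and, at bottom, isolates the same mechanism as the paper's proof, but it executes that mechanism by a genuinely different and more quantitative route. The paper argues purely qualitatively: the only events that matter are (i) the two highest initial estimates both lying near $v^*$, which benefits the bidders, and (ii) the second- and third-highest estimates both lying near $v^*$, which raises the winner's price; it then notes that the ratio of the likelihoods of these events, $F(v^*)/\bigl((n-2)(1-F(v^*))\bigr)$, increases monotonically from $0$ to $\infty$ over the support, so a threshold must exist --- the magnitudes of the two effects are never computed. Your decomposition into $\Delta W - \Delta R$ with Jensen-gap computations at the kinks of the payment function recovers exactly these two events (the $h_M$ kink is the paper's event (i), the $h_{M_2}$ kink is event (ii)), but additionally pins down their sizes: event (i) counts twice (welfare up and price down, each worth $\sigma^2/2$ per unit of density) while event (ii) counts once, which is why your sign condition $2h_M(v^*) > h_{M_2}(v^*)$ yields an explicit threshold $\tau = F^{-1}\bigl(\tfrac{n-2}{n}\bigr)$. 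That is strictly more information than the paper's proof provides, and your order-statistic densities are consistent with the paper's likelihood counts, so the two arguments agree where they overlap.

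One caveat on the step you yourself flagged: discarding the two-in-the-slice events cannot be justified by probabilities alone. Those events have probability $O(\delta^2)$ but produce a \emph{first-order} effect, $O(\sigma)$ per event (two refined values near $v^*$ get reordered by the noise itself), whereas your retained terms total $O(\delta\sigma^2)$; hence the reduction to a single slice advertiser is valid only in the regime $\delta \ll \sigma$, in addition to $\sigma$ being small relative to the scale on which $F$ varies. The theorem's hypotheses (``arbitrarily small $\delta$,'' ``narrow support'') are loose enough to permit this reading, and the paper's own proof is no more rigorous on this point; but note that the paper's sign-plus-monotone-ratio argument is agnostic to the ordering of $\delta$ and $\sigma$, which is precisely what the qualitative approach buys at the cost of not locating $\tau$. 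In the opposite regime ($\sigma \ll \delta$) your constants, and hence the location of the threshold, would change, though its existence --- which is all the theorem claims --- would not.
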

\begin{proof}
Let $n$ denote the number of advertisers. Note that if only one advertiser has an initial estimate close to $v^*$ then the data source will affect neither the winner nor the payment made by the winner in expectation.  Thus the data source can only affect the bidders' expected payoffs if either the two highest bidders both have initial estimates of their values that are arbitrarily close to $v^*$ or the second and third highest bidders both have initial estimates of their values that are arbitrarily close to $v^*$.

Now if the two highest bidders both have initial estimates of their values that are arbitrarily close to $v^*$, then after the refinement, in expectation the difference in their values increases, and thus the data increases the bidders' expected payoffs.  If the second and third highest bidders both have initial estimates of their values that are arbitrarily close to $v^*$, then in expectation the maximum of the two bids increases after refinement, and thus the data decreases the winning bidder's expected payoff in expectation.  The relative likelihood that the two highest bidders will both have initial estimates of their values that are arbitrarily close to $v^*$ is ${n \choose 2} f(v^*)^2 F(v^*)^{n-2}$, and the relative likelihood that the second and third highest bidders will both have initial estimates of their values that are arbitrarily close to $v^*$ is $\frac{n!}{(n-3)! 2} f(v^*)^2 (1-F(v^*)) F(v^*)^{n-3}$.  Thus the ratio of these likelihoods is $\frac{F(v^*)}{(n-2)(1-F(v^*))}$, which is increasing in $v^*$, becomes arbitrarily large as $v^*$ approaches the supremum of the support of $F(\cdot)$, and tends to $0$ as $v^*$ approaches the infinum of the support of $F$.  Therefore there exists a $\tau$ in the interior of the support of $F(\cdot)$ such that the value of this data source to the advertisers is positive for values of $v^* > \tau$, and negative for values of $v^* < \tau$.
\end{proof}

To understand this result, note that if a data source helps advertisers distinguish amongst high value advertising opportunities, then this data is likely to help an advertiser identify a high-value advertising opportunity that the advertiser might not have won in the absence of the data.  Such a possibility is beneficial for the advertiser.  But if a data source helps advertisers distinguish amongst low value advertising opportunities, this data is unlikely to help an advertiser much, because advertisers with low values are unlikely to win the auction.  Instead all this data source is likely to do is increase the expected price paid by the winner of the auction.  Thus a data source that helps advertisers distinguish amongst high-value advertising opportunities increases welfare when multiple advertisers have this data, but a data source that only helps advertisers distinguish amongst low-value advertising opportunities decreases welfare.
\section{Working with Multiple Signals} \label{sec:manysignals}

In the previous sections we derived the value of a single data source to the buyer in several diverse settings.  We now address questions related to multiple data sources.  Will additional data sources become more or less valuable when an advertiser already has access to other data sources?  How should an advertiser resolve the trade-off between the cost of a data source and its quality in deciding which of several possible data sources to purchase?

One might think intuitively that when an advertiser is buying multiple independent signals that the marginal value of additional signals would be decreasing in the number of signals that an advertiser has already purchased because each additional signal would do less to refine the advertiser's assessment of the user's type.  However, this need not be the case because the advertiser's value for targeting data depends crucially on the landscape of competing bids, and as a result, a second signal may be much more likely to have an effect on whether the advertiser wants to win the auction than the first signal:

\begin{observation}\label{t:multi-signals}  In the independent private values setting with two types of users, the marginal value of an additional signal need not vary monotonically with the number of signals the advertiser already has access to.

\end{observation}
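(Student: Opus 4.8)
The plan is to exhibit a concrete instance in which the marginal value of the \emph{second} signal strictly exceeds that of the \emph{first}, and then argue that marginal values must eventually decay to zero, so the sequence of marginal values cannot be monotone. The surprise the observation is after (a second signal being worth more than the first) is exactly the failure of the naive ``diminishing returns'' intuition, so I would target that reversal directly.

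First I would reduce the $k$-signal problem to a one-dimensional sufficient statistic. Since the signals are conditionally independent given the type and share a common quality $q$, the posterior odds depend only on the \emph{net count} $d$ (number of $h$-signals minus number of $\ell$-signals), satisfying $\frac{\Pr[H \mid d]}{\Pr[L \mid d]} = \frac{\pi}{1-\pi}\qr^{d}$. Writing $v_d$ for the resulting posterior expected value, $v_d$ is strictly increasing in $d$, we have $v_0 = \bar{v}$, and by the law of total expectation $\sum_d P_d v_d = \bar{v}$, where $P_d$ is the unconditional probability of net count $d$; with $k$ signals the bid equals $v_d$ for $d$ of the same parity as $k$. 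Conditioning on $d$ and bidding $v_d$, a short computation using $\sum_d P_d v_d = \bar v$ puts the utility gain over the no-data benchmark in the convex form used in the proof of Theorem~\ref{t:sosd}, namely $u_k - u_0 = \sum_d P_d\, u(v_d)$ with $u(v) = \int_{\bar v}^{v}(v-p)f(p)\,dp$, $u''(v) = f(v) \ge 0$, and $u(\bar v) = 0$. The marginal value of the $k$-th signal is $m_k = u_k - u_{k-1}$. The decisive feature this exposes is that $u$ is identically zero on any interval around $\bar v$ where $f$ vanishes, so only posterior values that push the bid into a region carrying competing-bid mass contribute.

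I would then engineer $f$ to exploit this. Take $\pi = \half$, $v_H = 1$, $v_L = 0$ (so $\bar v = \half$) and $q = \frac{2}{3}$, which gives single-signal values $\{\third, \frac{2}{3}\}$ and two-signal values $\{\frac{1}{5}, \half, \frac{4}{5}\}$, the mixed outcome returning exactly $\bar v$. Choosing $f$ symmetric about $\half$ and supported off $[\third, \frac{2}{3}]$, for instance uniform on $[\frac{1}{5}, \third] \cup [\frac{2}{3}, \frac{4}{5}]$, forces $u(\third) = u(\frac{2}{3}) = u(\half) = 0$, hence $m_1 = 0$, whereas the more extreme two-signal values reach the mass of $f$, giving $u(\frac{4}{5}) = u(\frac{1}{5}) > 0$ and therefore $m_2 = P_{hh}\,u(\frac{4}{5}) + P_{\ell\ell}\,u(\frac{1}{5}) > 0 = m_1$. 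Thus the second signal is strictly more valuable than the first.

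To upgrade ``rising at the start'' to genuine non-monotonicity, I would note that each additional signal induces a mean-preserving spread of the posterior-value distribution, so by convexity of $u$ every $m_k \ge 0$ and $u_k$ is nondecreasing; moreover $u_k$ is bounded above (it cannot exceed the perfect-information gain, and the per-signal bound analogous to Theorem~\ref{t:noisy2} also caps it), so $u_k$ converges and $m_k \to 0$. Since $m_1 = 0 < m_2$ while $m_k \ge 0$ with $m_k \to 0$, the sequence $(m_k)$ first rises and then falls, and is therefore not monotone. The main obstacle is not any single calculation but the bookkeeping: verifying the net-count reduction, confirming that the net-zero mixed outcome contributes exactly zero so it does not mask the effect, and checking that the chosen $f$ places no mass in $[\third, \frac{2}{3}]$ yet nonzero mass beyond the single-signal values. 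Once these are in place the inequalities are immediate.
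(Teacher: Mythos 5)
Your proposal is correct and follows essentially the same route as the paper: both arguments construct an instance where all competing-bid mass sits beyond the reach of any one-signal posterior but within reach of the two-signal posterior, forcing the marginal values to satisfy $m_1 = 0 < m_2$, and then conclude non-monotonicity from the fact that $m_k \to 0$ as the number of signals grows (the paper with a point mass at $p = \nicefrac{4}{5}$ and $q = \nicefrac{3}{4}$, you with a gap distribution and $q = \nicefrac{2}{3}$). Your treatment of the tail via the mean-preserving-spread/convexity argument is a more rigorous justification of the decay the paper simply asserts, but it is a refinement of the same construction rather than a different proof.
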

\begin{proof}
%\textbf{Proof of Theorem \ref{t:multi-signals}}  
Suppose that $v_H = 1$, $v_L = 0$, the prior probability the advertiser assigns to the possibility that the user is type $H$ is $\pi = \frac{1}{2}$, the highest competing bid is always $p = \frac{4}{5}$, and an advertiser is considering purchasing signals of quality $q = \frac{3}{4}$.  Then the marginal value of the first such signal the advertiser purchases is zero because the maximum probability the advertiser will assign to the possibility that the user is type $H$ after the first signal is $\frac{3}{4}$, and the advertiser will thus not bid enough to ever win the auction.  The marginal value of the second such signal the advertiser purchases is strictly positive because if both signals indicate that the user is type $H$, then the advertiser will assign probability $\frac{9}{10}$ to the possibility that the user is type $H$ and the advertiser will be able to obtain a strictly positive payoff by bidding in the auction.  But if the advertiser already has access to a large number of signals, then the marginal value of additional signals is arbitrarily close to zero since the advertiser will be able identify whether the user is type $H$ with virtual certainty even without an additional signal.  Thus the marginal value of an additional signal need not vary monotonically with the number of signals the advertiser has access to. 
\end{proof}

While the value of the data depends on what other signals the advertiser has access to, below we derive a bound on the value solely as a function of the quality of the data, independent of what other data sets are present:

\begin{theorem}\label{t:noprior}
In the independent private value model with two user types, $H$ and $L$, an advertiser's value for a signal with quality $q$ is bounded by $(v_H - v_L)^2 \bar{f} \left(\frac{2}{3}q^3 - \frac{1}{2}q^2 + \frac{1}{24}\right)$, where $\bar{f} \equiv \sup_{p \in [v_L, v_H]} f(p)$. 
\end{theorem}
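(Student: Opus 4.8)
The plan is to return to the integral form of $u_D-u_{ND}$ used in the proof of Theorem~\ref{t:noisy2} and to strip out every dependence on the prior $\pi$ and on the competing-bid distribution. Recall that $u_D-u_{ND}$ equals $\int_{\bar v}^{v\mid h} g_h(p) f(p)\,dp+\int_{v\mid\ell}^{\bar v}\!\big(-g_\ell(p)\big) f(p)\,dp$, where $g_h(p)=\pi q(v_H-p)+(1-\pi)(1-q)(v_L-p)$ and $g_\ell(p)=\pi(1-q)(v_H-p)+(1-\pi)q(v_L-p)$. As established there, $g_h$ is nonnegative and decreasing on $[\bar v, v\mid h]$, vanishing at $v\mid h$ and attaining height $(v_H-v_L)\pi(1-\pi)(2q-1)$ at $p=\bar v$; symmetrically $-g_\ell$ is nonnegative on $[v\mid\ell,\bar v]$, vanishing at $v\mid\ell$ with the same peak height at $\bar v$. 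Rather than bounding each integrand by its peak (Theorem~\ref{t:noisy2}'s rectangle bound, which costs a factor of two), I would use $f\le\bar f$ on $[v_L,v_H]$ and integrate the \emph{linear} integrand exactly, so each integral is at most $\bar f$ times a triangle area; summing, the value is at most $\tfrac12\bar f\,(v_H-v_L)\pi(1-\pi)(2q-1)\,(v\mid h-v\mid\ell)$.

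Next I would make the remaining factors explicit through Bayes' rule. Writing $v\mid h-v\mid\ell=(v_H-v_L)(\pi\mid h-\pi\mid\ell)$ and simplifying the posteriors from the statement gives $\pi\mid h-\pi\mid\ell=\frac{\pi(1-\pi)(2q-1)}{P(h)P(\ell)}$, where $P(h)=\pi q+(1-\pi)(1-q)$ and $P(\ell)=\pi(1-q)+(1-\pi)q$. Substituting, the value is at most $\tfrac12(v_H-v_L)^2\bar f\,\frac{[\pi(1-\pi)(2q-1)]^2}{P(h)P(\ell)}$, a closed form in $\pi$ and $q$ alone. To make the bound prior-independent — and hence valid no matter what other signals the advertiser already holds, since those only relocate the effective prior — I would maximize this over $\pi\in[0,1]$. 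A short optimization shows the maximizer is $\pi=\tfrac12$ (the expression is invariant under $\pi\mapsto1-\pi$ and single-peaked), where $P(h)=P(\ell)=\tfrac12$ and the bound collapses to $\tfrac18(v_H-v_L)^2\bar f\,(2q-1)^2$. This already implies the theorem, since $\tfrac18(2q-1)^2\le\tfrac1{24}(2q-1)^2(4q+1)$ exactly when $q\ge\tfrac12$, and $\tfrac1{24}(2q-1)^2(4q+1)=\tfrac23 q^3-\tfrac12 q^2+\tfrac1{24}$.

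The main obstacle is the $\pi$-optimization: verifying that $\pi=\tfrac12$ maximizes $[\pi(1-\pi)]^2/(P(h)P(\ell))$ on the unit interval, which is where all the algebra concentrates. If one prefers to land directly on the stated cubic rather than the sharper quadratic, the clean route is to bound the \emph{marginal} value of quality: since the value is $0$ at $q=\tfrac12$ (an uninformative signal), one writes the bound as $\int_{1/2}^q(\text{worst-case marginal at quality }x)\,dx$ and checks that this marginal is at most $(v_H-v_L)^2\bar f\,x(2x-1)$, whose integral is precisely $(v_H-v_L)^2\bar f(\tfrac23 q^3-\tfrac12 q^2+\tfrac1{24})$. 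Two sanity checks guard the arithmetic in either route: the bound must vanish at $q=\tfrac12$ and must equal $\tfrac5{24}(v_H-v_L)^2\bar f$ at the perfectly informative extreme $q=1$.
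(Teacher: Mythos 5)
Your main argument is correct, and it takes a genuinely different route from the paper's. The paper works with the marginal value of quality: it differentiates $U_\pi(q)$ with respect to $q$ (the boundary terms vanish because the integrands vanish at $v|h$ and $v|\ell$), proves $\frac{\partial U_\pi}{\partial q} \le \bar f (v_H-v_L)^2 q(2q-1)$ uniformly in $\pi$ via a two-case analysis on the sign of $2\pi-1$, and integrates from $q=\tfrac12$ using $U_\pi(\tfrac12)=0$ --- i.e., precisely the ``clean route'' you only sketch in your second paragraph, where the worst-case-marginal inequality you assert is in fact the entire technical content. Your primary route instead bounds the value integral directly: by linearity of the integrands each integral is at most $\bar f$ times a triangle area, giving $\tfrac12 \bar f (v_H-v_L)\pi(1-\pi)(2q-1)\,(v|h - v|\ell)$, and Bayes' rule turns this into a closed form in $(\pi,q)$. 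The $\pi$-optimization you flag as the main obstacle does go through cleanly: in your notation $P(h)P(\ell) = q(1-q) + \pi(1-\pi)(2q-1)^2$, so $[\pi(1-\pi)]^2/\bigl(P(h)P(\ell)\bigr)$ is an increasing function of $\pi(1-\pi)$ and is maximized at $\pi=\tfrac12$, yielding $\tfrac18 \bar f (v_H-v_L)^2(2q-1)^2$; your factorization $\tfrac23 q^3 - \tfrac12 q^2 + \tfrac1{24} = \tfrac1{24}(2q-1)^2(4q+1)$ is also correct, so the stated cubic bound follows for all $q \ge \tfrac12$. Comparing the two: your bound is strictly sharper for $q>\tfrac12$ (at $q=1$ it gives $\tfrac{3}{24}(v_H-v_L)^2\bar f$ versus the paper's $\tfrac{5}{24}(v_H-v_L)^2\bar f$) and avoids both differentiation under the integral sign and the case analysis; on the other hand, the paper's derivative bound $\frac{\partial U_\pi}{\partial q} \le \bar f (v_H-v_L)^2 q(2q-1)$ is reused verbatim to prove Corollary~\ref{t:noprior2}, which prices the gap between two signal qualities $q_1 > q_2$, and your triangle bound by itself does not deliver that corollary (your sketched second route would).
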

\begin{proof}
Recall from Lemma~\ref{l:noisy} that the value of the data to the advertiser in this setting is $U_\pi(q) = u_D - u_{ND} =$
$$\int_{\bar{v}}^{v | h} (\pi q (v_H - p) + (1 - \pi)(1 - q)(v_L - p))f(p) dp  -  \int_{v | \ell}^{\bar{v}} (\pi(1 - q)(v_H - p) + (1 - \pi)q (v_L - p)) f(p) dp$$

To derive the bound, we will define a function $T(q)$ such that, $U_\pi(\frac{1}{2})  \leq T(\frac{1}{2})$ and that $\frac{\partial U_\pi}{\partial q} \leq \frac{\partial T}{\partial q}$.  

To proceed, note that:
\begin{align*}
\frac{\partial U_\pi}{\partial q} = &\left(\pi q (v_H - v|h) + (1 - \pi)(1 - q)(v_L - v|h)\right)f(v|h) \frac{\partial v|h}{\partial q} + \int_{\bar{v}}^{v|h} ( \pi(v_H - p) - (1 - \pi)(v_L - p))f(p) dp - \\
&\left(\pi(1 - q)(v_H - v|\ell) + (1 - \pi)q(v_L - v|\ell)\right)f(v|l) \frac{\partial v|\ell}{\partial q} + \int_{v|\ell}^{\bar{v}} (\pi(v_H - p) - (1 - \pi)(v_L - p))f(p) dp \\
=&\int_{v|\ell}^{v|h}(\pi (v_H - p) - (1 - \pi)(v_L - p))f(p) dp, 
\end{align*}

\noindent where we make use of the facts illustrated in previous proofs that $\left(\pi q (v_H - v|h) + (1 - \pi)(1 - q)(v_L - v|h)\right) = 0$ and $\left(\pi(1 - q)(v_H - v|\ell) + (1 - \pi)q(v_L - v|\ell)\right) = 0$.

We can further simplify the derivative:
$$\int_{v|\ell}^{v|h}(\pi (v_H - p) - (1 - \pi)(v_L - p))f(p) dp = (\pi v_H - (1 - \pi)v_L)(F(v|h) - F(v | \ell)) - (2\pi - 1) \int_{v|\ell}^{v|h}p f(p) dp$$

We consider two cases, depending on the sign of $(2\pi - 1)$. When $(2\pi - 1) \geq 0$, we use the fact that $\int_{v|\ell}^{v|h} p f(p) dp \geq (F(v|h) - F(v|\ell)) v | \ell$. Then:
\begin{align*}
\frac{\partial U_\pi}{\partial q} &\leq (\pi v_H - (1 - \pi)v_L)(F(v|h) - F(v | \ell))  - (2\pi - 1) (F(v|h) - F(v|\ell)) v | \ell \\
&=(F(v|h) - F(v| \ell)) \left(\pi v_H - (1 - \pi)v_L - (2\pi-1) v | \ell\right) \\
&= (F(v|h) - F(v | \ell))(v_H - v_L) \cdot \frac{\pi (1 - \pi)}{\pi (1 - q) + (1 - \pi) q}\\
&\leq (F(v|h) - F(v | \ell))(v_H - v_L) q
\end{align*}
The last step follows since for $q \geq \frac{1}{2}$ and $(2\pi - 1) \geq 0$ we have $\frac{\pi (1 - \pi)}{\pi (1 - q) + (1 - \pi) q} \leq q$ for the following reason:  $q(\pi (1 - q) + (1 - \pi) q)$ is minimized on the interval $q \in [\frac{1}{2}, 1]$ when either $q = \frac{1}{2}$ or $q = 1$.  But when $q = \frac{1}{2}$, $q(\pi (1 - q) + (1 - \pi) q) = \frac{1}{2} \geq \pi(1-\pi)$ for all $\pi$.  And when $q = 1$, $q(\pi (1 - q) + (1 - \pi) q) = 1 - \pi \geq \pi(1-\pi)$.  Thus $q(\pi (1 - q) + (1 - \pi) q) \geq \pi (1 - \pi)$ for $q \geq \frac{1}{2}$ and $(2\pi - 1) \geq 0$, and $\frac{\pi (1 - \pi)}{\pi (1 - q) + (1 - \pi) q} \leq q$.

When $(2\pi - 1) \leq 0$, we use the fact that $\int_{v|\ell}^{v|h} p f(p) dp \leq (F(v|h) - F(v|\ell)) v | h$. Then:
\begin{align*}
\frac{\partial U_\pi}{\partial q} &\leq (\pi v_H - (1 - \pi)v_L)(F(v|h) - F(v | \ell))  - (2\pi - 1) (F(v|h) - F(v|\ell)) v | h \\
&=(F(v|h) - F(v| \ell)) \left(\pi v_H - (1 - \pi)v_L - (2\pi-1) v | h\right) \\
&= (F(v|h) - F(v | \ell))(v_H - v_L) \cdot \frac{\pi (1 - \pi)}{\pi q + (1 - \pi)(1 -  q)}\\
&\leq (F(v|h) - F(v | \ell))(v_H - v_L) q
\end{align*}

Here the last step follows since for $q \geq \frac{1}{2}$ and $(2\pi - 1) \leq 0$ we have $\frac{\pi (1 - \pi)}{\pi q + (1 - \pi)(1 - q)} \leq q$ for the following reason:  $q(\pi q + (1 - \pi)(1- q))$ is minimized on the interval $q \in [\frac{1}{2}, 1]$ when either $q = \frac{1}{2}$ or $q = 1$.  But when $q = \frac{1}{2}$, $q(\pi q + (1 - \pi)(1- q)) = \frac{1}{2} \geq \pi(1-\pi)$ for all $\pi$.  And when $q = 1$, $q(\pi q + (1 - \pi)(1 - q)) = \pi \geq \pi(1-\pi)$.  Thus $q(\pi q + (1 - \pi)(1 - q)) \geq \pi (1 - \pi)$ for $q \geq \frac{1}{2}$ and $(2\pi - 1) \leq 0$, and $\frac{\pi (1 - \pi)}{\pi (1 - q) + (1 - \pi) q} \leq q$.

Therefore in both cases: 
\begin{align*}
\frac{\partial U_{\pi}}{\partial q} &\leq (F(v|h) - F(v|\ell))(v_H - v_L) q \\
&\leq \bar{f} (v| h - v|\ell)(v_H - v_L) q\\
&= \bar{f} (v_H - v_L)^2 q (\pi | h -\pi | \ell) \\
&\leq \bar{f} (v_H - v_L)^2 \frac{\pi (1 - \pi) q (2q - 1)}{(\pi q + (1 - \pi)(1 - q))(\pi(1 - q) + (1 - \pi)q)}\\
&\leq \bar{f}(v_H - v_L)^2 q (2q - 1)
\end{align*}

Here the last inequality follows from the fact that $(\pi q + (1 - \pi)(1 - q))(\pi(1 - q) + (1 - \pi)q)$ is minimized on $q \in [\frac{1}{2}, 1]$ when $q = 1$ and $(\pi q + (1 - \pi)(1 - q))(\pi(1 - q) + (1 - \pi)q) = \pi(1 - \pi)$ when $q = 1$.  Thus $(\pi q + (1 - \pi)(1 - q))(\pi(1 - q) + (1 - \pi)q) \geq \pi(1 - \pi)$ for all $q \in [\frac{1}{2}, 1]$ and $\frac{\pi (1 - \pi) q (2q - 1)}{(\pi q + (1 - \pi)(1 - q))(\pi(1 - q) + (1 - \pi)q)} \leq q(2q - 1)$ for all $q \in [\frac{1}{2}, 1]$.

Therefore for any $c$, $T(q)  = \bar{f}(v_H - v_L)^2 (\frac{2}{3}q^3 - \frac{q^2}{2} + c)$ will have grow faster than $U_{\pi}(q)$. To conclude, observe that $U_\pi(\frac{1}{2}) = 0$ for all $\pi$; setting $c = \frac{1}{24}$ ensures that $T(\frac{1}{2}) = U_{\pi}(\frac{1}{2})  = 0$. Combined with the fact that for all $q \geq \frac{1}{2}$ we have: $\frac{\partial U_\pi}{\partial q} \leq \frac{dT}{dq}$, which completes the proof.
\end{proof}

By applying very similar logic we can illustrate bounds on the maximum additional amount that an advertiser would be willing to pay for a more accurate signal regardless of the advertiser's prior.  This is done below:

\begin{corollary}\label{t:noprior2}  Suppose an advertiser has the option of buying two different types of signals with qualities $q_1$ and $q_2$ respectively, where $q_1 > q_2$.  Then if the cost of the first signal is more than $\bar{f}(v_H - v_L)^2[\frac{2}{3}(q_1^3 - q_2^3) - \frac{1}{2}(q_1^2 - q_2^2)]$ greater than the cost of the second signal, the advertiser will always prefer to purchase the second signal regardless of the advertiser's prior.

\end{corollary}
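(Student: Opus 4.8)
The plan is to recast the purchasing decision as a comparison of net utilities and then reduce everything to a uniform-in-$\pi$ bound on the \emph{incremental} value of the more accurate signal. An advertiser with prior $\pi$ who buys signal $i$ at cost $c_i$ nets $U_\pi(q_i) - c_i$, where $U_\pi(q)$ is the value-of-data function from Lemma~\ref{l:noisy}. Hence the advertiser weakly prefers the cheaper, less accurate signal $2$ precisely when $U_\pi(q_2) - c_2 \geq U_\pi(q_1) - c_1$, i.e.\ when $c_1 - c_2 \geq U_\pi(q_1) - U_\pi(q_2)$. So it suffices to produce an upper bound on $U_\pi(q_1) - U_\pi(q_2)$ that is valid simultaneously for every $\pi$; if the cost gap exceeds that bound, signal $2$ is preferred regardless of the prior.

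The key observation is that the statement of Theorem~\ref{t:noprior} alone does not suffice, since it only upper-bounds $U_\pi(q)$ for a single $q$, whereas here I must bound a difference and therefore would need to pair an upper bound on $U_\pi(q_1)$ with a \emph{lower} bound on $U_\pi(q_2)$. Instead I would reuse the stronger intermediate fact established inside that proof, namely the pointwise derivative bound
\begin{equation*}
\frac{\partial U_\pi}{\partial q} \leq \bar{f}(v_H - v_L)^2\, q(2q-1),
\end{equation*}
which holds for all $q \geq \tfrac{1}{2}$ and, crucially, uniformly over $\pi$. Integrating this inequality over the interval $[q_2, q_1]$ rather than over $[\tfrac12, q]$ gives
\begin{equation*}
U_\pi(q_1) - U_\pi(q_2) = \int_{q_2}^{q_1} \frac{\partial U_\pi}{\partial q}\, dq \leq \bar{f}(v_H - v_L)^2 \int_{q_2}^{q_1} q(2q-1)\, dq.
\end{equation*}

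The remaining step is the routine evaluation $\int_{q_2}^{q_1} q(2q-1)\,dq = \int_{q_2}^{q_1}(2q^2 - q)\,dq = \tfrac{2}{3}(q_1^3 - q_2^3) - \tfrac{1}{2}(q_1^2 - q_2^2)$, which yields exactly the quantity $\bar{f}(v_H - v_L)^2\big[\tfrac{2}{3}(q_1^3 - q_2^3) - \tfrac{1}{2}(q_1^2 - q_2^2)\big]$ appearing in the statement. Since this upper bound is independent of $\pi$, whenever the cost difference $c_1 - c_2$ strictly exceeds it we have $c_1 - c_2 > U_\pi(q_1) - U_\pi(q_2)$ for every prior $\pi$, and the conclusion follows. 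There is no real obstacle beyond recognizing that the proof must invoke the derivative estimate from Theorem~\ref{t:noprior}'s argument rather than its headline bound; once that is identified, the corollary is an immediate re-integration over a shifted interval.
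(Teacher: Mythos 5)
Your proposal is correct and follows essentially the same route as the paper: the paper's own proof also reuses the derivative bound $\frac{\partial U_\pi}{\partial q} \leq \bar{f}(v_H - v_L)^2 q(2q-1)$ established inside the proof of Theorem~\ref{t:noprior} (rather than its headline bound) and integrates it over $[q_2, q_1]$ to obtain the stated cost threshold, uniformly in $\pi$.
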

\begin{proof}
We have seen in the proof of the previous theorem that $\frac{\partial U_{\pi}}{\partial q} \leq \bar{f}(v_H - v_L)^2 q (2q - 1)$.  From this it follows that an advertiser's utility for a signal of quality $q_1 > q_2$ cannot exceed the advertiser's utility for a signal of quality $q_2$ by more than $\bar{f}(v_H - v_L)^2[\frac{2}{3}(q_1^3 - q_2^3) - \frac{1}{2}(q_1^2 - q_2^2)]$.
\end{proof}

The results in this section illustrate that there are natural bounds on the price of the data that one can use to quickly rule out whether certain sources of data are cheap enough to be worthwhile, without knowing the advertiser's prior or the entire distribution of competing bids.  If the cost of the data exceeds the bounds in the previous theorems, an advertiser will never want to purchase it.

\section{Reverse Engineering the Value of Data}
\label{sec:reverse}
So far, we have focused on how an advertiser values targeting data.  However, it is also natural to inquire about the value of data from the publisher's perspective.  If a publisher is supplying data that maps the user to a specific population segment, can he  deduce how much an advertiser would value this data solely by observing the advertiser's average bidding behavior for the different realizations of the data?  Unfortunately the answer to this question is no.  We prove this formally below, but first provide some intuition behind the result.  %The advertiser's value for the targeting data will depend critically on how the advertiser values advertising to each of the various types of users, there are various possible ways for the advertiser to value advertising to each of the individual users that are potentially consistent with any given average bidding behavior for the different realizations of the targeting data, and these different possibilities could lead to dramatically different valuations of the targeting data.  In particular, the advertiser's value for this targeting data can range anywhere from zero to some maximal value that represents the value this advertiser would have if this is the only targeting dimension the advertiser could ever possibly care about.  

Suppose the publisher supplies data that identifies whether a user lives in California, and further suppose that the advertiser is making a higher average bid on users from California.  There are several possible ways for the advertiser to exhibit this aggregate bidding behavior.  First, it is possible that the only factor that influences how much the advertiser values a particular advertising opportunity is whether the user lives in California.  In this case, the targeting information that the publisher supplies is truly valuable to the advertiser.

Another possibility is that the advertiser only values advertising to some subset of the users in California that she can already target on, and values all other users equally. In this case, the publisher's data is worthless to the advertiser because the advertiser would make the same bids without this data, yet the average bids for users in and outside of California are the same as before. Thus it is also possible for the advertiser to have zero value for this data.

Furthermore, by taking convex combinations of the two extremes, it is possible to construct scenarios where the advertiser's value for the data can assume any value in this range. We thus obtain the following result:

\begin{theorem}\label{t:reverse}  It is not possible to infer an advertiser's value for targeting data if the publisher only observes the advertiser's average bids for different realizations of the targeting data as well as the distribution of the highest competing bids.  Given these average bids, the advertiser's value for the targeting data may range anywhere from zero to some maximal value.
%In particular, suppose an advertiser is making an average bid of $b_H$ for one realization of the targeting %data, an average bid of $b_L < b_H$ for the other realization of the targeting data, and the different %realizations of the targeting data occur with probabilities $\pi$ and $1 - \pi$ respectively.  Then for any %$v$ between $0$ and $\pi \int_{\bar{b}}^{b_H} (b_H - p) f(p) \; dp + (1 - \pi) \int_{b_L}^{\bar{b}} (p - b_L) %f(p) \; dp$, where $\bar{b} \equiv \pi b_H + (1-\pi)b_L$ and $f(p)$ denotes the pdf corresponding to the %distribution of competing bids, there exists a realization of the advertiser's value for advertising to %individual users that is consistent with this aggregate bidding behavior and would imply that the advertiser's %value for the data is $v$.  
\end{theorem}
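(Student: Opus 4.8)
The plan is to prove the theorem by explicit construction, exhibiting a one-parameter family of advertiser valuation models that all produce identical observable behavior---the same average bid on each realization of the targeting data and the same competing bid distribution---yet span the full range of possible values for the data, from zero up to a strictly positive maximum. Since the publisher's inference problem is to recover the value of the data from the observables alone, it suffices to show two distinct models that agree on all observables but disagree on the value of the data; I would then interpolate between them to obtain the full range.

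First I would set up the two extreme scenarios sketched in the intuition preceding the statement. Consider the binary setting of Section~\ref{sec:binary} with a signal (say ``lives in California'' versus not) that partitions impressions into a group $C$ of measure $\alpha$ and its complement. In the \emph{valuable-data} model, the advertiser's value is genuinely determined by the signal: she values impressions in $C$ at some $v_C$ and impressions outside at some lower $v_{\bar C}$, so that conditioning on the signal lets her bid differently and win the right impressions, yielding a strictly positive value for the data computed as in Lemma~\ref{l:noisy}. In the \emph{worthless-data} model, the advertiser already distinguishes a finer subset within $C$ on which alone her value differs, and values everything else (inside and outside $C$) identically; here her optimal bids do not depend on the California signal at all, so the value of the data is exactly zero. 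The key requirement is to choose the parameters of the two models so that the \emph{average} bid on $C$, and the average bid on its complement, coincide across the two models---this is what makes the two scenarios observationally indistinguishable to the publisher.

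Next I would verify the matching of observables. Because the publisher sees only the average bid per realization (not the full conditional bid distribution) together with the fixed competing-bid distribution $f$, I have enough free parameters (the values and the measure of the finer subset in the worthless-data model) to solve the two linear equations ``average bid on $C$ matches'' and ``average bid on $\bar C$ matches.'' Once these averages agree, the publisher's observations are literally the same in both worlds, while the value of the data---which depends on the full conditional behavior through the win probabilities against $f$, not merely on the averages---differs. Finally, to obtain \emph{every} intermediate value, I would take convex combinations: with probability $\lambda$ the advertiser is the valuable-data type and with probability $1-\lambda$ the worthless-data type, each calibrated to the same averages, so that as $\lambda$ ranges over $[0,1]$ the value of the data sweeps continuously from $0$ to its maximum by the intermediate value theorem.

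The main obstacle I anticipate is the calibration step: ensuring that a genuinely worthless-data model can reproduce \emph{exactly} the same per-realization average bids as a model in which the data is valuable. The delicate point is that the average bid is an average over heterogeneous sub-populations, so I must exploit the freedom to place value-heterogeneity \emph{within} the California segment (on dimensions the advertiser can already target) in a way that leaves the segment average unchanged while making the California signal itself informationally redundant. Getting the bookkeeping right---so that averages match but win probabilities against $f$ (and hence the data's value) genuinely differ---is where the real content lies; the convex-combination argument for the full range is then routine.
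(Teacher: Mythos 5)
Your overall strategy is the one the paper uses: its proof constructs exactly your two extremes (an advertiser whose value genuinely tracks the publisher's segment, versus one who only values a finer subset she can already target and values everything else identically), calibrates them so the per-realization average bids coincide, and then interpolates. Your description of the two extremes and of the calibration step (using the free parameters --- the special value on the finer subset and its measure --- to force the segment averages to match) is correct and is where you and the paper agree.

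The gap is in your interpolation step. You realize the ``convex combination'' as a lottery over advertiser types: with probability $\lambda$ the advertiser \emph{is} the valuable-data type and with probability $1-\lambda$ the worthless-data type. But the advertiser knows her own valuation model, so for any realization of this lottery her value for the data is either $0$ or the maximal value $V$ --- never anything strictly in between. Only the publisher's posterior expectation $\lambda V$ is intermediate, and that is a statement about the publisher's beliefs, not about ``the advertiser's value for the targeting data,'' which is what the theorem asserts can take any value in $[0,V]$. As written, you establish impossibility of inference (the two extremes alone already do that) but not the full-range claim. The fix is to push the mixing parameter inside the valuation structure of a \emph{single} deterministic advertiser, which is what the paper does: it takes sets $T_1,T_2$ inside the segment and $T_3$ outside, lets the advertiser pre-target a subset $S_1 \subseteq T_1$ whose measure is a fraction $\alpha$ of $T_1$, and --- crucially --- recalibrates the values as functions of $\alpha$, setting $v_1(\alpha)=b_{12}+\alpha\frac{\pi_2}{\pi_1}(b_{12}-b_3)$ and $v_2(\alpha)=(1-\alpha)b_{12}+\alpha b_3$, so that the observed average $b_{12}$ is unchanged for every $\alpha$. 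At $\alpha=0$ the data has its maximal value, at $\alpha=1$ the value outside $S_1$ is uniformly $b_3$ so the data is worthless, and the value is continuous in $\alpha$, sweeping the whole interval. Alternatively you could keep your two fixed extremes but split the impression space along an independent attribute the advertiser can already target, valuing a $\lambda$-fraction of users per the valuable model and the rest per the worthless model; the averages still match and the data's value is exactly $\lambda$ times the maximum. Either way, the interpolation must live inside one advertiser's valuation model, not in a randomization over which advertiser she is.
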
 
\begin{proof} 
%\textbf{Proof of Theorem \ref{t:reverse}}  
Suppose each user belongs to one of three disjoint sets $T_1, T_2, T_3$. Let $\pi_i$ denote the probability that the user belongs to the set $T_i$. We will use $T_{12}$ to denote the set $T_1 \cup T_2$ and $\pi$ to denote the corresponding probability i.e. $\pi = \pi_1 + \pi_2$. 
Suppose that a publisher is supplying targeting data that indicates whether a user is in the set $T_{12}$ or in the set $T_{3}$. %, where $T_{12} \cup T_{3} = T$, the set of all users.  
Consider an advertiser that has a value $v_i$ for users in the set $T_i$. 
Also suppose the advertiser is making an average bid of $b_{12}$ for users in the set $T_{12}$ and making an average bid of $b_{3} < b_{12}$ for users in the set $T_{3}$. %, where $b_H = b_{12}$ and $b_L = b_3$.  

%Finally suppose that the set $T_{12}$ can be further divided into the disjoint sets $T_{1}$ and $T_{2}$ %satisfying $T_{1} \cup T_{2} = T_{12}$, and the advertiser has a value $v_{1}$ for users in the set $T_{1}$, %$v_{2}$ for users in the set $T_{2}$, and $v_{3}$ for users in the set $T_{3}$.  We assume throughout that %the probability a user is in the set $T_{1}$ is $\pi_{1}$, the probability a user is in the set $T_{2}$ is %$\pi_{2}$, and the probability a user is in the set $T_{3}$ is $\pi_{3}$, where $\pi_1 + \pi_2 = \pi$.
%
First suppose that $v_{1} = v_{2} > v_{3}$ and the advertiser has no additional targeting data.  Note that in this scenario, it must be the case that $b_{12} = v_{1} = v_{2}$ and $b_{3} = v_{3}$ and if the advertiser did not have access to the targeting data, then the advertiser would make a bid of $b = \pi_{1} v_{1} + \pi_{2} v_{2} + \pi_{3} v_{3}$.  Thus the advertiser's value for the targeting data is 
\begin{equation} \label{vod}
(\pi_{1} + \pi_{2}) \int_{b}^{b_{12}} (b_{12} - p) f(p) \; dp + \pi_{3} \int_{b_{3}}^{b} (p - b_{3}) f(p) \; dp 
%%&= \pi \int_{\bar{b}}^{b_H} (b_H - p) f(p) \; dp + (1 - \pi) \int_{b_L}^{\bar{b}} (p - b_L) f(p) \; dp
\end{equation}
We now seek to show that the advertiser's value for the targeting data can range anywhere from zero to the value given in the equation above.  To see this, let $S_{1}$ denote a set of random users drawn from the set $T_{1}$ that compose a fraction $\alpha$ of the users in $T_{1}$.  Suppose the advertiser has access to targeting data which identifies whether a user is in the set $S_{1}$ or not in the set $S_{1}$.  In this case, when the advertiser also has access to the data that indicates whether a user is in the set $T_{12}$ or in the set $T_{3}$, the advertiser will use the following bidding strategy:  If the advertiser knows the user is in the set $S_{1}$, then the advertiser makes a bid of $b_{1} = v_{1}$.  If the advertiser knows the user is in the set $T_{12}$ but is not in the set $S_{1}$, then the advertiser makes a bid of $b_{2}(\alpha) = \frac{(1-\alpha)\pi_{1}v_{1} + \pi_{2} v_{2}}{(1-\alpha)\pi_{1} + \pi_{2}}$.  And if the advertiser knows the user is in the set $T_{3}$, then the advertiser makes a bid of $b_{3} = v_{3}$.  From this it follows that an advertiser's average bid for users in the set $T_{12}$ is $\frac{\alpha \pi_{1} b_{1} + ((1-\alpha)\pi_{1} + \pi_{2}) b_{2}(\alpha)}{\pi_{1} + \pi_{2}} = \frac{\alpha \pi_{1} v_{1} + (1-\alpha)\pi_{1}v_{1} + \pi_{2} v_{2}}{\pi_{1} + \pi_{2}} = \frac{\pi_{1} v_{1} + \pi_{2} v_{2}}{\pi_{1} + \pi_{2}}$, so $b_{12} = \frac{\pi_{1} v_{1} + \pi_{2} v_{2}}{\pi_{1} + \pi_{2}}$.

Now let $v_{1}$ and $v_{2}$ vary with $\alpha$ in the following way:  We let $v_{1}(\alpha) \equiv b_{12} + \alpha \frac{\pi_2}{\pi_1}(b_{12} - b_{3})$ and let $v_{2}(\alpha) \equiv (1-\alpha)b_{12} + \alpha b_{3}$.  Note that these values of $v_{1}(\alpha)$ and $v_{2}(\alpha)$ ensure that for any value of $\alpha$, it is necessarily the case that $b_{12} = \frac{\pi_{1} v_{1}(\alpha) + \pi_{2} v_{2}(\alpha)}{\pi_{1} + \pi_{2}}$.  Further note that we can calculate the advertiser's value for the targeting data being supplied by the publisher when the advertiser's values vary with $\alpha$ in this way.  If the advertiser did not have access to the publisher's targeting data, then the advertiser would still bid $b_{1} = v_{1}(\alpha)$ when the advertiser knows the user is in the set $S_{1}$, but the advertiser would now bid $\hat{b}(\alpha) = \frac{((1-\alpha)\pi_{1} + \pi_{2})b_{2}(\alpha) + \pi_{3} b_{3}}{(1-\alpha)\pi_{1} + \pi_{2} + \pi_{3}}$ if the user is not in the set $S_{1}$ rather than making different bids on occasions where the user is in the set $T_{12}$ rather than $T_{3}$.  Thus the advertiser's value for the targeting data being supplied by the publisher is

\begin{equation*}
((1-\alpha)\pi_{1} + \pi_{2}) \int_{\hat{b}(\alpha)}^{b_{2}(\alpha)} (b_{2}(\alpha) - p) f(p) \; dp + \pi_{3} \int_{b_{3}}^{\hat{b}(\alpha)} (p - b_{3}) f(p) \; dp 
\end{equation*}

Now note that when $\alpha = 0$, this expression is equal to the value of data given in equation (\ref{vod}) that gives the advertiser's value for the targeting data when $v_{1} = v_{2} > v_{3}$ and the advertiser has no additional targeting data.  Also note that when $\alpha = 1$, this expression equals $0$.  Furthermore, this expression is continuous in $\alpha$, so the expression assumes every possible value from $0$ to the value in equation (\ref{vod}) when $\alpha$ ranges from $0$ to $1$.  From this it follows that the advertiser's value for the targeting data can range anywhere from zero to the value given in equation (\ref{vod}). 
\end{proof}

\section{Conclusion}
In this paper we extensively analyzed the problem of valuing targeting data to an advertiser, and showed precisely how it depends both on the advertiser's parameters (value, budget, etc.)  and the actions of other players (competing bids, participation, etc.). A natural next step is to design truthful and welfare maximizing mechanisms that can aid in a creation of an effective  marketplace. 

\bibliography{data}

\end{document}